\newtheorem{thm}{Theorem}[section]
\newtheorem{prop}[thm]{Proposition}
\newtheorem{lem}[thm]{Lemma}
\newtheorem*{thm*}{Theorem}
\newtheorem*{Kor*}{Korollar}
\theoremstyle{definition}
\newcommand{\norm}[1]{\ensuremath{\lVert #1 \rVert}}
\newcommand{\abs}[1]{\ensuremath{\lvert #1 \rvert}}
\newcommand{\eps}{\varepsilon}
\renewcommand{\epsilon}{\varepsilon}
\newcommand{\Opeps}[1]{\ensuremath{\mathrm{Op}_\eps^W\left( #1 \right)}}
\newcommand{\ii}{\infty}
\newcommand{\cV}{\mathcal{V}}
\newcommand{\cE}{\mathcal{E}}
\newcommand{\ud}{\mathrm{d}}
\newcommand{\ue}{\mathrm{e}}
\newcommand{\ui}{\mathrm{i}}
\newcommand{\R}{\mathbb{R}}
\newcommand{\C}{\mathbb{C}}
\newcommand{\N}{\mathbb{N}}
\DeclareMathOperator{\tr}{Tr}
\DeclareMathOperator{\Tr}{Tr}
\newcommand\pscal[1]{{\ensuremath{\left\langle #1 \right\rangle}}}
\begin{document}
\title{Semi-classical Dirac vacuum polarisation in a scalar field}
\author{Jonas Lampart and Mathieu Lewin}
\maketitle
\begin{abstract}
We study vacuum polarisation effects of a Dirac field coupled to an external scalar field and derive a semi-classical expansion of the regularised vacuum energy. The leading order of this expansion is given by a classical formula due to Chin, Lee-Wick and Walecka, for which our result provides the first rigorous proof. We then discuss applications to the non-relativistic large-coupling limit of an interacting system, and to the stability of homogeneous systems.
\end{abstract}
\section{Introduction}
The three-dimensional, massive Dirac operator in a scalar field $\phi$ is given by
\begin{equation*}
D_\phi=-\ui \sum_{j=1}^3 \alpha_j \frac{\partial}{\partial x_j}  + \beta (1 + \phi)\,,
\end{equation*}
where $\alpha_j,\beta \in \C^{4\times 4}$ are a representation of the anti-commuting Dirac-matrices with $\alpha_j^2=1$, $\beta^2=1$ and we have chosen units such that $\hbar=c=m=1$. Under appropriate assumptions on $\phi$ it is a self-adjoint operator with domain $H^1(\R^3, \C^4)\subset L^2(\R^3, \C^4)$.
It arises, for instance, in nuclear physics, where the interaction between nucleons is solely modelled by various \emph{meson} fields, including the scalar $\sigma$-meson $\phi$.
According to the picture of the Dirac-sea, a particle, such as a nucleon, should be described by a function $\psi$ in the positive spectral subspace for $D_\phi$ in combination with the filled \enquote{sea} of negative energy states. The state of this particle is then described by its one-particle density matrix
\begin{equation*}
 \gamma:=\chi_{(-\infty, 0]} (D_\phi) + \vert \psi \rangle \langle \psi\vert\,,
\end{equation*}
where $\chi_I$ denotes the characteristic function of $I$. In this picture, the spectral projection $\chi_{(-\infty, 0]} (D_\phi)$ describes a state with no particles, the vacuum. The total energy is then (formally) given by
\begin{equation*}
 \tr(D_\phi\gamma)=-\frac12\tr|D_\phi|+\pscal{\psi,D_\phi,\psi}\,.
\end{equation*}
The first term is the formal expression for the energy of the vacuum since the spectrum of $D_\phi$ is symmetric with respect to zero. 
%
%
%
%
%
Of course, $-(1/2)\tr|D_\phi|$ is not finite and must be regularised. 

Similar objects arise if one considers instead of the scalar field $\beta \phi$ an electrostatic field $\phi$. Using various methods of regularisation, these have been studied rigorously by many authors, especially from a variational point of view~\cite{ChaIra-89, ChaIraLio-89, BacBarHelSie-99,HaiSie-03, HaiLewSer-05a, HaiLewSer-05b, HaiLewSol-07, HaiLewSer-09}. The charge density of the vacuum was derived in a perturbative regime by Hainzl~\cite{Hainzl-04}. 

Of the several possible regularisation procedures we will follow~\cite{Chin-77,SerWal-86,Walecka-04} and define the regularised vacuum energy by
\begin{equation}
 \mathcal{E}_{\mathrm{vac}}(\phi)=\Tr\left(-\tfrac12|D_\phi| - R(\phi)\right)\,,
 \label{eq:def_E_vac}
\end{equation}
where $R(\phi)$ is the fourth order Taylor expansion of $-\tfrac12 |D_\phi|$ at $\phi=0$. It will be proved in Proposition~\ref{prop:Evac} below that $-\tfrac12 |D_\phi| - R(\phi)$ is trace-class under the sole assumption that $\phi\in H^1(\R^3,\R)$.
This is a reasonable assumption as it it is equivalent to the finiteness of the energy of the $\sigma$-field 
\begin{equation*}
\frac12\int_{\R^3}\left(|\nabla\phi(x)|^2+ M^2|\phi(x)|^2\right)\,dx\,.
\end{equation*}
If we consider the total energy, the sum of the energies of the $\sigma$-field and the Dirac particles, the stationary points solve a system of coupled equations. We discuss applications of our results, obtained for fixed $\phi$, to this problem in Sections~\ref{sect:app} and~\ref{sec:matter}. 
The dynamics of such coupled systems were studied by Sabin~\cite{Sabin-14} with a scalar field solving a non-linear wave equation, and in~\cite{HaiLewSpa-05, Sabin-14} for an electric field. 

The formula~\eqref{eq:def_E_vac} is very complicated to evaluate, even numerically, for a given $\phi$. In applications it is therefore important to obtain simple approximations of $\cE_{\mathrm{vac}}$ in different regimes.
For the study of nuclear matter, Walecka~\cite{SerWal-86,Walecka-04}, Lee-Wick~\cite[p.~2306]{LeeWic-74} and Chin~\cite[Eq.~(3.53)]{Chin-77} calculated the vacuum energy per unit volume for a constant $\phi\in(-1,+\ii)$ and obtained the simple formula
\begin{equation}\label{eq:Walecka}
 \mathcal{V}(\phi)=-\frac{1}{4 \pi^2} (1+\phi)^4\log(1+\phi) - P(\phi)\,,
\end{equation}
where
\begin{equation*}
P(\phi)=-\frac{1}{4 \pi^2}\left(\phi+\frac{7}{2}\phi^2+\frac{13}{3}\phi^3+\frac{25}{12}\phi^4\right)
\end{equation*}
is the fourth-order Taylor polynomial of the first term.
In this paper, we give the first rigorous derivation of the effective vacuum energy~\eqref{eq:Walecka}, in a semi-classical regime where the field $\phi$ varies slowly. Our main result is the following
\begin{thm}[Derivation of the effective vacuum energy]\label{thm:CV}
Let $\phi\in \mathscr{C}^\infty(\R^3, \R)$ be such that
\begin{itemize}
 \item $|\phi|<1$;
 \item $(1+\abs{x}^2)^{(1+\abs{\alpha})/2}\partial_{x}^\alpha\phi\in L^\ii(\R^3)$ for every multiindex $\alpha\in \N^3$.
\end{itemize}
Then we have
\begin{equation}
\lim_{\epsilon\to0} \epsilon^3\mathcal{E}_{\mathrm{vac}}\big(\phi(\epsilon\cdot)\big)=\int_{\R^3} \mathcal{V}(\phi(x))\,\ud x\,,
\label{eq:semi-classics}
\end{equation}
where $\mathcal{V}$ is given by~\eqref{eq:Walecka}.
\end{thm}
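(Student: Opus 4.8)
The plan is to remove the slow spatial variation by rescaling and to recognise $\cE_{\mathrm{vac}}(\phi(\eps\cdot))$ as the trace of a semi-classical pseudodifferential operator with $\hbar=\eps$. Conjugating $D_{\phi(\eps\cdot)}$ with the $L^2$-unitary dilation $(U_\eps\psi)(x)=\eps^{3/2}\psi(\eps x)$ turns it into the semi-classical Dirac operator
\[
D^\eps := U_\eps D_{\phi(\eps\cdot)}U_\eps^* = -\ui\eps\sum_{j=1}^3\alpha_j\partial_{x_j}+\beta(1+\phi)\,,
\]
whose symbol is the Hermitian matrix $d(x,\xi)=\sum_j\alpha_j\xi_j+\beta(1+\phi(x))$. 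Since $\Tr$ is unitarily invariant, $\cE_{\mathrm{vac}}(\phi(\eps\cdot))=\Tr\big(-\tfrac12|D^\eps|-R^\eps\big)$, where $R^\eps$ is the fourth-order Taylor polynomial in $\phi$ of $-\tfrac12|D^\eps|$ at $\phi=0$ (the conjugation simply replaces the free operator $D_0$ by $-\ui\eps\,\alpha\cdot\nabla+\beta$ in each Taylor coefficient). Because $\phi$ is continuous with $\abs{\phi}<1$ and $\phi(x)\to0$ at infinity, $\sup\abs{\phi}<1$, so $d(x,\xi)^2=(|\xi|^2+(1+\phi(x))^2)\mathbf 1_{\C^4}$ is bounded below by a positive constant and $D^\eps$ has a spectral gap around $0$ uniformly in small $\eps$. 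Hence $\lambda\mapsto\abs{\lambda}$ is smooth on a neighbourhood of $\mathrm{spec}(D^\eps)$ and $|D^\eps|$ can be treated by a smooth functional calculus.

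First I would make the functional calculus semi-classical. Writing $\abs{\lambda}=g(\lambda)$ with $g$ smooth near the spectrum, the Helffer--Sjöstrand formula
\[
g(D^\eps)=-\frac1\pi\int_{\C}\bar\partial\tilde g(z)\,(D^\eps-z)^{-1}\,\ud L(z)
\]
reduces everything to the resolvent, which is a semi-classical $\Psi$DO whose symbol expands as $(d(x,\xi)-z)^{-1}$ plus corrections carrying positive powers of $\eps$. This yields the Weyl trace expansion
\[
\Tr\big(f(D^\eps)\big)=\frac{1}{(2\pi\eps)^3}\int_{\R^3}\int_{\R^3}\tr_{\C^4}\,f\big(d(x,\xi)\big)\,\ud x\,\ud\xi+O(\eps^{-1})\,,
\]
the odd-order terms vanishing by parity. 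For the principal symbol, $d(x,\xi)$ has eigenvalues $\pm\sqrt{|\xi|^2+(1+\phi(x))^2}$, each doubly degenerate, so $\tr_{\C^4}\big(-\tfrac12|d(x,\xi)|\big)=-2\sqrt{|\xi|^2+(1+\phi(x))^2}$.

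It then remains to carry the regularisation through this expansion and to evaluate the leading term. At the level of symbols $R^\eps$ subtracts exactly the fourth-order Taylor polynomial in $\phi$ of $-\tfrac12|d(x,\xi)|$; since the large-$\xi$ expansion of $\sqrt{|\xi|^2+(1+\phi)^2}$ has divergent coefficients that are polynomials in $\phi$ of degree at most four, this subtraction makes the $\xi$-integral absolutely convergent. Multiplying by $\eps^3$ collapses the prefactor to $(2\pi)^{-3}$ and kills the remainder, giving
\[
\lim_{\eps\to0}\eps^3\cE_{\mathrm{vac}}(\phi(\eps\cdot))=\int_{\R^3}\Big[\frac{1}{(2\pi)^3}\int_{\R^3}\Big(-2\sqrt{|\xi|^2+(1+\phi(x))^2}-r(x,\xi)\Big)\,\ud\xi\Big]\ud x\,,
\]
with $r$ the corresponding symbol of $R^\eps$. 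Evaluating the inner integral explicitly — most cleanly by differentiating three times in $(1+\phi)^2$ to reach a convergent integral and integrating back — reproduces $\cV(\phi(x))$ of~\eqref{eq:Walecka}, the non-analytic part being the $(1+\phi)^4\log(1+\phi)$ term.

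The main obstacle is that $-\tfrac12|D^\eps|$ is \emph{not} trace-class on its own: only the regularised difference is, by Proposition~\ref{prop:Evac}. I would therefore not apply the Weyl trace formula to $|D^\eps|$ directly but to the fifth-order Taylor remainder $-\tfrac12|D^\eps|-R^\eps=\frac{1}{4!}\int_0^1(1-s)^4\,\frac{\ud^5}{\ud s^5}\big(-\tfrac12|D^\eps_{s\phi}|\big)\,\ud s$, in which each $s$-derivative inserts a factor $\beta\phi$ between resolvents; five such insertions provide enough decay and smoothing to obtain a genuinely trace-class operator to which the semi-classical calculus applies with explicit remainder control. The two points needing the most care are (i) making the symbol and trace-norm estimates uniform over the non-compact configuration space $\R^3$, which is exactly where the weighted decay hypotheses $(1+\abs{x}^2)^{(1+\abs{\alpha})/2}\partial_x^\alpha\phi\in L^\ii$ enter, and (ii) checking that the Taylor-in-$\phi$ counterterm and the momentum-space structure of the ultraviolet divergences match, so that the regularised inner integral is precisely $\cV$.
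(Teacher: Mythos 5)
Your overall strategy---dilate to the semi-classical operator, gain trace-class through five insertions of $\beta\phi$, realise everything as $\eps$-Weyl quantisations of matrix-valued symbols, and read off the limit from the principal symbol via $\Tr\Opeps{\sigma}=(2\pi\eps)^{-3}\int\!\!\int\tr_{\C^4}\sigma\,\ud x\,\ud p$---is the same as the paper's; only the functional-calculus technology differs. The paper never applies a smooth functional calculus to $\abs{D^\eps}$ at all: it starts from $\abs{D_\phi}=\tfrac1\pi\int_\R D_\phi^2(D_\phi^2+\omega^2)^{-1}\ud\omega$, resums the regularised operator into $F(A)$ with $F(z)=\log(1+z)-r(z)$ and $A=\beta\phi(D_0+\ui\omega)^{-1}$ (this is where $\abs{\phi}<1$ enters, giving $\norm{A}<1$), keeps $\omega$ as an auxiliary symbol variable, and computes $F(A)$ by a Cauchy integral at the symbol level. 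Two points in your version need repair but are fixable: (i) the Helffer--Sj\"ostrand formula as written does not apply to $g(\lambda)=\abs{\lambda}$, because for a function of linear growth the integral against $\bar\partial\tilde g$ is not absolutely convergent; it only becomes usable after insertion into the fifth-order Taylor remainder (six resolvent factors then give enough decay in $z$), or one avoids it entirely with the $\omega$-representation above. (ii) ``Odd-order terms vanish by parity'' is not a valid argument for matrix-valued symbols---the paper explicitly warns that the scalar-symbol formulas fail here and proves $\cV_1=0$ by a separate computation---but this costs you nothing for the statement at hand, since the limit \eqref{eq:semi-classics} only requires an $O(\eps^{-2})$ bound on the first correction.

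The genuine gap is the final step, which you assert (``reproduces $\cV$'') instead of performing, and which is where the entire content of the theorem lies. Carry it out: your leading term is
\begin{equation*}
W(t)=\frac{1}{(2\pi)^3}\int_{\R^3}\Big(-2\sqrt{\abs{\xi}^2+(1+t)^2}-r(t,\xi)\Big)\,\ud\xi\,,
\end{equation*}
with $r(t,\xi)$ the fourth-order Taylor polynomial in $t$ of the first term, so that $W$ vanishes to fourth order at $t=0$ and
\begin{equation*}
W^{(5)}(t)=-\frac{2}{(2\pi)^3}\int_{\R^3}\frac{\partial^5}{\partial m^5}\sqrt{\abs{\xi}^2+m^2}\bigg\vert_{m=1+t}\ud\xi
=-\frac{2}{(2\pi)^3}\cdot\frac{12\pi}{1+t}=-\frac{3}{\pi^2(1+t)}\,.
\end{equation*}
The function $\cV$ of \eqref{eq:Walecka} also vanishes to fourth order at $t=0$, but its fifth derivative is $-\tfrac{1}{4\pi^2}\cdot\tfrac{24}{1+t}=-\tfrac{6}{\pi^2(1+t)}$. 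Since a function vanishing to fourth order at $0$ is determined by its fifth derivative, $W=\tfrac12\cV$: your (correct) principal-symbol computation yields $-\tfrac{1}{8\pi^2}(1+t)^4\log(1+t)$ minus its Taylor polynomial, i.e.\ one half of the claimed limit, and there is no compensating factor of two hiding in the symbol ($\tr_{\C^4}\abs{d(x,\xi)}=4\sqrt{\abs{\xi}^2+(1+t)^2}$ is right) or in the trace formula. So you cannot simply declare agreement with \eqref{eq:Walecka}; executed faithfully, your plan proves the statement with $\cV$ replaced by $\cV/2$. Note that this is not a defect of your route relative to the paper's: the paper's intermediate formula \eqref{eq:TrLog1} has fifth derivative $-\tfrac{1}{8\pi^4}\int_{\R^4}\partial_t^5\log\big(1+\tfrac{2t+t^2}{1+\abs{\xi}^2}\big)\ud\xi=-\tfrac{3}{\pi^2(1+t)}$ as well; the factor of two enters only in the subsequent series evaluation there, where $\sum_{j\geq3}\tfrac{(-1)^{j+1}}{j(j-1)(j-2)}y^j$ equals $\tfrac12(1+y)^2\log(1+y)$ modulo a polynomial, which at $y=2t+t^2$ is $(1+t)^4\log(1+t)$ rather than $2(1+t)^4\log(1+t)$. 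This discrepancy between your leading term and the target formula must be confronted explicitly---either by locating a genuine missing factor (I do not believe there is one) or by correcting the constant in the target---before the proposal can be called a proof of the statement.
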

The convergence~\eqref{eq:semi-classics} holds under weaker assumptions on $\phi$, but in Section~\ref{sect:semicl} we actually prove a more precise result. Namely, we show the existence of a function $\cV_\epsilon$ which approximates the true vacuum energy to any order, and which coincides with $\cV$ to first order. That is,
\begin{equation*}
 \epsilon^3\mathcal{E}_{\mathrm{vac}}\big(\phi(\epsilon\cdot)\big)=\int_{\R^3} \mathcal{V}_{\eps}(\phi)(x)\,\ud x+O(\eps^n),\qquad \forall n\geq1
\end{equation*}
where $\mathcal{V}_{\eps}=\cV+O(\epsilon^2)$. The function $\cV_\eps$ is a series in $\eps$, each term depending on a finite number of derivatives of $\phi$. We thereby justify as well the higher-order expansions found in later works~\cite{Blunden-90, Chan-86, Cheyette-85, Perry-86}. Explicit formulas of the type~\eqref{eq:Walecka} can be obtained for all odd dimensions. We will focus on the most interesting case of three-dimensional space, the corresponding proofs for other dimensions can easily be obtained by adapting our arguments line-by-line.

As we said, the regularisation used in~\eqref{eq:def_E_vac} is not the only possible choice. Following~\cite{GraHaiLewSer-13}, one can also consider the Pauli-Villars energy
$$\mathcal{E}_{\mathrm{PV}}(\phi):=-\frac12\tr\left(\sum_{j=0}^4c_j\,|D_{m_j,\phi}|\right)$$
which corresponds to introducing new masses $m_j$ and coefficients $c_j$ such that $c_0=m_0=1$ and
$$\sum_{j=0}^4c_j(m_j)^k=0,\qquad \forall k=0,...,3.$$
Our method can be generalized to this case, leading to the convergence
$$\lim_{\eps\to0}\eps^3\mathcal{E}_{\mathrm{PV}}\big(\phi(\eps\cdot)\big)=-\frac{1}{4\pi^2}\int_{\R^3}\left(\sum_{j=0}^4c_j\;(m_j+\phi)^4\log(1+\phi/m_j)\right).$$
This model still depends on the regularisation masses. In the limit where $m_j\to\ii$ for all $j=1,..,3$, one exactly recovers $\int_{\R^3}\cV(\phi)$.

The advantage of the Pauli-Villars scheme is that it preserves gauge invariance. Since the $\sigma$-model studied in this paper has no such invariance, the simpler regularisation procedure~\eqref{eq:def_E_vac} works just as well. Using the Pauli-Villars method will be important for the study of the Dirac vacuum in a slowly varying magnetic field in the forthcoming work~\cite{GraLewSer-15}. In this case, the effective energy obtained in the limit is the famous Heisenberg-Euler Lagrangian~\cite{HeiEul-36,Weisskopf-36,Schwinger-51a}.

In physical quantities the parameter $\eps$ may correspond to $\hbar/mc$, where $m$ is the mass of the Dirac particles, in which case the limit $\eps\to 0$ would be relevant to both the semi-classical limit $\hbar\to 0$ and the non-relativistic limit $c\to\infty$. For an interacting system, for which bound states exhibiting a certain scaling behaviour may exist, other identifications are possible.
As an application, we discuss the simultaneous non-relativistic and large-coupling limit of the Dirac-sigma model in Section~\ref{sect:app}, where $\eps$ is $\sqrt{\hbar/mc}$. Finally, in Section~\ref{sec:matter} we present some numerical results about the stability of a homogeneous system of Dirac particles interacting with a scalar field, a question related to nuclear matter.
\section{The regularised vacuum energy}\label{sect:reg}
From now on we always work with the Dirac operator $D_\phi$ including a semi-classical parameter $\eps$
\begin{equation*}
 D_\phi:=-\ui\eps \sum_{j=1}^3 \alpha_j  \frac{\partial}{\partial x_j}  + \beta (1 + \phi)\,,
\end{equation*}
which is of course equivalent to taking a slowly varying field $\phi_\eps(x)=\phi(\eps x)$, by a unitary dilation.
%
%
%
In this section we introduce the regularised vacuum energy and derive its basic properties. We suppose $\phi\in H^1(\R^3,\R)$, which ensures that $\beta \phi$ is infinitesimally $D_0$-bounded and thus that this operator is self-adjoint with domain $\mathrm{Dom}(D_\phi)=H^1(\R^3, \C^4)$. We define the regularised vacuum energy by the formula:
\begin{equation}\label{eq:Edef}
 \mathcal{E}_{\mathrm{vac}}(\phi):=\frac12 \Tr \bigg( - \abs{D_\phi} +
 \sum_{j=0}^4 \frac{1}{j!} \frac{\ud^j}{\ud s^j}\bigg\vert_{s=0} \abs{D_{s\phi}}\bigg)\,.
\end{equation}
Note that for small $s$, zero is not in the spectrum of $D_{s \phi}$ so we can expect that $\abs{D_{s\phi}}$ is differentiable.
Our proof that~\eqref{eq:Edef} is a correct definition relies on the integral formula
\begin{equation}\label{eq:abs_int}
 \abs{D_\phi}=\frac1\pi \int_\R \frac{D_{\phi}^2}{D_{\phi}^2+\omega^2}\,\ud \omega=\frac{1}{2\pi}\int_{\R} 2 - \frac{\ui \omega}{D_\phi + \ui \omega} + \frac{\ui \omega}{D_\phi - \ui \omega}\,\ud \omega\,,
\end{equation}
where the integral (in the sense of Bochner) exists in the norm topology of $\mathscr{L}(H^2, L^2)$, even if zero is in the spectrum of $D_\phi$, because in that norm the integrand decays like $\omega^{-2}$ for $\omega\to \infty$ and stays bounded at $\omega=0$. 
\begin{prop}\label{prop:Evac}
The regularised vacuum energy defined by equation~\eqref{eq:Edef} is a continuous function of $\phi \in H^1(\R^3, \R)$. It equals
\begin{equation*}
 \mathcal{E}_{\mathrm{vac}}(\phi)= -\frac{1}{2\pi}\Tr\bigg( \int_\R \frac{D_\phi^2}{D_\phi^2 + \omega^2}
-\sum_{j=0}^4 \frac{1}{j!} \frac{\ud^j}{\ud s^j}\bigg\vert_{s=0} \frac{D_{s\phi}^2}{D_{s\phi}^2 + \omega^2}\, \ud \omega\bigg)
\end{equation*}
 and is smooth on the open set $U:=\lbrace \phi\in H^1(\R^3,\R): 0\notin  \sigma(D_\phi)\rbrace$.
\end{prop}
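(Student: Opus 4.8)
The plan is to establish trace-class membership and continuity by controlling the integrand in \eqref{eq:abs_int} after subtracting its fourth-order Taylor expansion. First I would write, for each fixed $\omega$, the resolvent expansion of $D_\phi^2/(D_\phi^2+\omega^2)$ in powers of $\phi$. Since $D_\phi^2 = D_0^2 + \beta\phi D_0 + D_0\beta\phi + \phi^2$ (using $\beta\alpha_j+\alpha_j\beta=0$), the operator $D_\phi^2+\omega^2$ differs from $D_0^2+\omega^2$ by terms that are relatively bounded, and one obtains a Neumann-type series whose $j$-th term is a product of $j$ factors of $\phi$ (or its combinations with $D_0$) separated by resolvents $(D_0^2+\omega^2)^{-1}$. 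Subtracting the fourth-order Taylor polynomial in $s$ at $s=0$ — which is exactly the sum over $j=0,\dots,4$ appearing in the statement — leaves a remainder that begins at fifth order in $\phi$.

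The heart of the argument is then a Schatten-norm estimate on this fifth-order remainder, uniform enough in $\omega$ to survive the $\omega$-integration. Each term in the remainder is, schematically, a product of five factors involving $\phi$ interleaved with five resolvents $(D_0^2+\omega^2)^{-1}$. I would use the Kato--Seiler--Simon inequality, $\norm{f(x)g(-\ui\nabla)}_{\mathfrak{S}_p}\le (2\pi)^{-3/p}\norm{f}_{L^p}\norm{g}_{L^p}$ for $p\ge 2$, to place each factor $\phi\,(D_0^2+\omega^2)^{-1/2}$ in an appropriate Schatten class; with five factors of $\phi\in H^1\subset L^6$ one can reach the trace class $\mathfrak{S}_1$ by a generalised Hölder inequality $1/p_1+\dots+1/p_5=1$ in Schatten norms. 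The decay of the resolvents in $\omega$ then gives an integrable bound of order $\omega^{-2}$ near infinity, so the Bochner integral of the trace-class-valued integrand converges in $\mathfrak{S}_1$. This simultaneously proves that $-\tfrac12|D_\phi|-R(\phi)$ is trace-class and expresses $\mathcal{E}_{\mathrm{vac}}(\phi)$ by the claimed $\omega$-integral formula, since one may exchange trace and integral by Fubini for the Bochner integral.

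Continuity in $\phi\in H^1$ follows from the same estimates applied to differences: replacing one factor of $\phi$ by $\phi-\phi'$ and using multilinearity shows that $\phi\mapsto$ (the $\mathfrak{S}_1$-valued integrand) is Lipschitz on bounded sets of $H^1$, hence $\mathcal{E}_{\mathrm{vac}}$ is continuous. For smoothness on $U=\{\phi:0\notin\sigma(D_\phi)\}$, the point is that away from the Taylor-subtraction issue the resolvent $(D_\phi-\ui\omega)^{-1}$ (equivalently $(D_\phi^2+\omega^2)^{-1}$) is a real-analytic, indeed smooth, function of $\phi$ in operator norm whenever $0\notin\sigma(D_\phi)$, and differentiating under the integral and trace is justified by the uniform Schatten bounds; the Taylor-subtracted terms are polynomial in $\phi$ and manifestly smooth. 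I expect the main obstacle to be the $\omega$-uniformity of the fifth-order Schatten estimate: one must distribute the resolvents carefully so that the product of $L^p$-norms of $(D_0^2+\omega^2)^{-1}$ decays fast enough in $\omega$ to be integrable while simultaneously absorbing the five factors of $\phi$, and near $\omega=0$ one must check the integrand stays bounded in $\mathfrak{S}_1$ rather than merely in $\mathscr{L}(H^2,L^2)$, which is where the fifth-order (rather than lower-order) subtraction is essential.
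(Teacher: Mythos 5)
Your overall strategy (Taylor-subtract the integrand of the resolvent representation, then Kato--Seiler--Simon plus H\"older in Schatten norms, then dominated convergence for continuity and for smoothness on $U$) is the same as the paper's, but there is a genuine gap in the step that produces the fifth-order remainder: you obtain it as the tail of a Neumann series for $(D_{s\phi}^2+\omega^2)^{-1}$ around $(D_0^2+\omega^2)^{-1}$, and that series need not converge. The proposition assumes only $\phi\in H^1(\R^3,\R)$, with no smallness and no spectral gap, so there is no bound of the form $\norm{(D_{s\phi}^2-D_0^2)(D_0^2+\omega^2)^{-1}}<1$ available, in particular not uniformly down to $\omega=0$; consequently \enquote{the remainder begins at fifth order} cannot be justified as a convergent tail, and summing trace-norm bounds over that tail would again require smallness of $\phi$. (The paper does use exactly such a series, in Section~\ref{sect:semicl}, but only under the extra hypothesis $\abs{\phi}<1$, which guarantees $\norm{\beta\phi(D_0+\ui\omega)^{-1}}<1$.) What is needed instead is an \emph{exact, finite} remainder identity. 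This is what the paper's proof supplies: it stays with the first-order resolvents $(D_{s\phi}\pm\ui\omega)^{-1}$, whose $s$-derivatives at $0$ have the closed form $(-1)^jj!(D_0\pm\ui\omega)^{-1}\big(\beta\phi(D_0\pm\ui\omega)^{-1}\big)^j$, and iterates the resolvent formula~\eqref{eq:res_form} five times to write the Taylor-subtracted integrand exactly as
\begin{equation*}
I(\omega,\phi)=\ui\omega\,(D_\phi+\ui\omega)^{-1}\big(\beta\phi(D_0+\ui\omega)^{-1}\big)^5,
\end{equation*}
valid for \emph{every} $\phi\in H^1$: the single full resolvent enters only through the uniformly bounded factor $\ui\omega(D_\phi+\ui\omega)^{-1}$, and the five free factors are each estimated by~\eqref{eq:KSS cont} with $q=5$, giving $\norm{I(\omega,\phi)}_{\mathfrak{S}_1}\leq C(1+\omega^2)^{-1}\eps^{-3}\norm{\phi}_{H^1}^5$, which is integrable in $\omega$. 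You could run the same finite telescoping at the level of $D_\phi^2$ (the prefactor in $\omega^2(D_\phi^2+\omega^2)^{-1}$ then plays the role of $\ui\omega(D_\phi+\ui\omega)^{-1}$), but some such exact identity is indispensable.

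A second, repairable, inaccuracy: the remainder terms in your scheme are \emph{not} products of five factors $\phi(D_0^2+\omega^2)^{-1/2}$. Since $D_\phi^2-D_0^2=D_0\beta\phi+\beta\phi D_0+\phi^2=2\phi+\phi^2+\ui\eps\beta\,\alpha\cdot(\nabla\phi)$, the blocks separating the free resolvents are $\phi$, $\phi^2$ and $\nabla\phi$ (or, if one does not use the anticommutation simplification, they carry unbounded factors $D_0$ which can combine into $D_0(D_0^2+\omega^2)^{-1}D_0$, with no momentum decay at all). Because $\nabla\phi$ lies only in $L^2$, those factors can only be placed in $\mathfrak{S}_2$, and the H\"older exponents cannot all be chosen equal; one must check term by term that they still sum to at least one and that the resulting (weaker) $\omega$-decay remains integrable. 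Relatedly, your appeal to $\phi\in H^1\subset L^6$ is not enough by itself: five KSS factors built from $L^6$ norms land only in $\mathfrak{S}_{6/5}$; to reach $\mathfrak{S}_1$ with five factors one needs exponent $5$, i.e.\ $\phi\in L^5$ (also given by Sobolev), which is precisely the choice $q=5$ in the paper. Finally, to identify the claimed $\omega$-integral formula with the definition~\eqref{eq:Edef} one must also exchange $\ud^j/\ud s^j$ with $\int\ud\omega$; the paper justifies this by checking that the $s$-derivatives of the integrand lie in $L^1\big((\R,\ud\omega),\mathscr{L}(H^2,L^2)\big)$, a point your write-up passes over.
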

Before proving Proposition~\ref{prop:Evac} let us prove two preparatory lemmas on the resolvent of $D_\phi$. Throughout, we will denote by $\mathfrak{S}_p(\mathscr{H})$, $1\leq p\leq \infty$, the $p$-th Schatten-ideal in the bounded operators $\mathscr{L}(\mathscr{H})$ on the Hilbert space $\mathscr{H}$. The most important cases are the trace class ($p=1$) and the compact operators ($p=\infty$).
\begin{lem}\label{lem:resolvent1}
 For every $\omega \in \R$ and $3<q\leq \infty$ the map $\phi\mapsto \beta\phi(D_0 + \ui \omega)^{-1}$ is continuous from $H^1(\R^3,\R)$ to $\mathfrak{S}_q(L^2(\R^3, \C^4))$.
\end{lem}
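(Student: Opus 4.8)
The plan is to exploit the fact that the map $\phi \mapsto \beta\phi(D_0+\ui\omega)^{-1}$ is \emph{linear} in $\phi$ (neither $\beta$ nor the free resolvent depends on $\phi$), so that continuity is equivalent to boundedness from $H^1(\R^3,\R)$ into $\mathfrak{S}_q$. The whole proof therefore reduces to a single Schatten-norm estimate, for which the natural tool is the (matrix-valued) Kato–Seiler–Simon inequality: for $f,g\in L^p(\R^3)$ with $p\geq 2$ one has $f(x)\,g(-\ui\nabla)\in\mathfrak{S}_p$ with $\norm{f(x)g(-\ui\nabla)}_{\mathfrak{S}_p}\leq(2\pi)^{-3/p}\norm{f}_{L^p}\norm{g}_{L^p}$. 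Here the multiplication factor is $f=\beta\phi$ and the Fourier multiplier is $g(\xi)=\big(\eps\sum_{j=1}^3\alpha_j\xi_j+\beta+\ui\omega\big)^{-1}$.

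First I would identify the symbol. Since the $\alpha_j,\beta$ anticommute and square to the identity, $\big(\eps\sum_j\alpha_j\xi_j+\beta\big)^2=(\eps^2|\xi|^2+1)\,\mathrm{Id}$, so the free symbol has eigenvalues $\pm\sqrt{\eps^2|\xi|^2+1}$ and consequently
\[
 \norm{\Big(\eps\textstyle\sum_j\alpha_j\xi_j+\beta+\ui\omega\Big)^{-1}}_{\C^{4\times4}}=(\eps^2|\xi|^2+1+\omega^2)^{-1/2}.
\]
This matrix-valued function behaves like $|\xi|^{-1}$ at infinity, so it lies in $L^q(\R^3)$ if and only if $q>3$; this is precisely the origin of the restriction $q>3$ in the statement, the borderline $q=3$ being logarithmically divergent. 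For the multiplication factor I would invoke the Sobolev embedding $H^1(\R^3)\hookrightarrow L^q(\R^3)$, valid exactly for $2\leq q\leq 6$. Hence for every $q$ with $3<q\leq 6$ the Kato–Seiler–Simon inequality applies directly and gives
\[
 \norm{\beta\phi\,(D_0+\ui\omega)^{-1}}_{\mathfrak{S}_q}\leq C_q\,\norm{\phi}_{L^q}\norm{g}_{L^q}\leq C_q'\,\norm{\phi}_{H^1},
\]
with a constant depending on $q,\omega,\eps$ but not on $\phi$; by linearity this is the desired continuity into $\mathfrak{S}_q$.

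The remaining range $6<q\leq\infty$ is where the only genuine obstacle appears, since $H^1(\R^3)$ does \emph{not} embed into $L^q$ for $q>6$, so the argument above breaks down. I would circumvent this by the nesting of Schatten ideals: for $q\geq 6$ one has $\mathfrak{S}_6\subset\mathfrak{S}_q$ with $\norm{A}_{\mathfrak{S}_q}\leq\norm{A}_{\mathfrak{S}_6}$, while $\mathfrak{S}_\infty$ consists of the compact operators with the operator norm, which is dominated by any Schatten norm. Thus the $q=6$ estimate already established controls all larger $q$ simultaneously (and in particular exhibits the operator as compact), and combining the two ranges covers the full interval $3<q\leq\infty$. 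I expect the only points requiring care to be the matrix-valued form of the Kato–Seiler–Simon bound and the sharp integrability threshold $q>3$ for the free symbol.
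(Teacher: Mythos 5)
Your proposal is correct and follows essentially the same route as the paper: the Kato--Seiler--Simon inequality combined with the Sobolev embedding $H^1(\R^3)\hookrightarrow L^q(\R^3)$ handles $3<q\leq 6$, and the nesting of Schatten ideals extends the bound to $6<q\leq\infty$, with linearity in $\phi$ converting the norm estimate into continuity. Your explicit computation of the matrix norm of the free symbol, $(\eps^2|\xi|^2+1+\omega^2)^{-1/2}$, simply makes precise what the paper absorbs into the factor $(1+\omega^2)^{(3-q)/2q}\eps^{-3/q}\norm{(p^2+1)^{-1/2}}_{L^q}$ in its estimate.
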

\begin{proof}
Using the Kato-Seiler-Simon (KSS)-inequality~\cite[Theorem 4.1]{Simon-79}
\begin{equation}\label{eq:KSS}
 \norm{f(x)g(-\ui \nabla_x)}_{\mathfrak{S}_q(L^2(\R^3))}\leq (2\pi)^{-3/q} \norm{f}_{L^q(\R^3)} \norm{g}_{L^q(\R^3)}\,,
\end{equation}
and the Sobolev embedding $H^1(\R^3)\hookrightarrow L^6(\R^3)$ we have for $3<q\leq 6$:
\begin{align}
  \norm{\beta\phi(D_0 + \ui \omega)^{-1}}_{\mathfrak{S}_q}
  %
  &\leq C (1+\omega^2)^{(3-q)/2q}\eps^{-3/q} \norm{\phi}_{H^1} \norm{(p^2+1)^{-1/2}}_{L^q}\label{eq:KSS cont}
 \end{align}
 with some constant $C>0$. Since $\mathfrak{S}_r\hookrightarrow\mathfrak{S}_q$ continuously for $r>q$ this proves the claim.
\end{proof}
\begin{lem}\label{lem:resolvent2}
 The set $U:=\lbrace \phi\in H^1(\R^3,\R): 0\notin  \sigma(D_\phi)\rbrace$ is an open subset of $H^1(\R^3,\R)$ and for every $\omega \in \R$ the map $\phi \mapsto (D_\phi + \ui \omega)^{-1}$ from $U$ to $\mathscr{L}\big((L^2(\R^3, \C^4)\big)$ has infinitely many continuous derivatives.
\end{lem}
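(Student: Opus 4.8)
The plan is to reduce everything to the invertibility and smooth inversion of a \emph{bounded} operator, using the resolvent of the free operator $D_0$ as a pivot. The starting observation is that $0\notin\sigma(D_0)$: since $D_0^2=-\eps^2\Delta+1\geq1$, the free resolvent $(D_0+\ui\omega)^{-1}$ exists as a bounded operator from $L^2$ to $H^1$ for \emph{every} $\omega\in\R$, including $\omega=0$. I would then factor, for $\phi\in H^1$ and $\omega\in\R$,
\[
D_\phi+\ui\omega=\bigl(1+K(\phi,\omega)\bigr)(D_0+\ui\omega),\qquad K(\phi,\omega):=\beta\phi\,(D_0+\ui\omega)^{-1},
\]
an identity valid on $\mathrm{Dom}(D_\phi)=H^1$. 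By Lemma~\ref{lem:resolvent1}, $K(\phi,\omega)\in\mathfrak{S}_q\subset\mathscr{L}(L^2)$, and since $\phi\mapsto K(\phi,\omega)$ is linear and continuous it is a bounded linear (hence $C^\infty$) map from $H^1$ into $\mathscr{L}(L^2)$.

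Because $(D_0+\ui\omega)$ is boundedly invertible, the factorization shows that $D_\phi+\ui\omega$ is invertible if and only if $1+K(\phi,\omega)$ is invertible in $\mathscr{L}(L^2)$. Taking $\omega=0$ gives the characterization that $\phi\in U$ if and only if $1+K(\phi,0)$ is invertible. Since the set of invertible operators is open in $\mathscr{L}(L^2)$ and $\phi\mapsto 1+K(\phi,0)$ is continuous, $U$ is the preimage of an open set and is therefore open. For the smoothness I would use that, for $\phi\in U$, the operator $1+K(\phi,\omega)$ is invertible for \emph{all} $\omega\in\R$: for $\omega\neq0$ this is automatic since $D_\phi$ is self-adjoint, so $D_\phi+\ui\omega$ is invertible, while for $\omega=0$ it is the definition of $U$. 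Hence, on $U$,
\[
(D_\phi+\ui\omega)^{-1}=(D_0+\ui\omega)^{-1}\bigl(1+K(\phi,\omega)\bigr)^{-1}.
\]
The right-hand side is a composition of the $C^\infty$ map $\phi\mapsto 1+K(\phi,\omega)$, the operator-inversion map $T\mapsto T^{-1}$ (which is real-analytic, with derivative $H\mapsto -T^{-1}HT^{-1}$, on the open set of invertible operators), and a fixed bounded left-multiplication; being a composition of smooth maps it has infinitely many continuous derivatives.

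The only genuinely delicate point is the passage from the spectral condition $0\notin\sigma(D_\phi)$ on the \emph{unbounded} self-adjoint operator $D_\phi$ to the invertibility of the \emph{bounded} operator $1+K(\phi,\omega)$; this is precisely what the factorization buys, once one checks that the identity holds on $H^1=\mathrm{Dom}(D_\phi)$ and that the displayed formula indeed inverts $D_\phi+\ui\omega$ (a short verification using $(D_0+\ui\omega)^{-1}(D_0+\ui\omega)=\mathrm{Id}$ on $H^1$). Everything else --- openness from continuity, and smoothness from the analyticity of operator inversion composed with the affine map $\phi\mapsto K(\phi,\omega)$ --- is then routine, and the same argument in fact yields real-analyticity, not merely $C^\infty$.
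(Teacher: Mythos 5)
Your argument is correct, but it is organised differently from the paper's proof. You factor once and for all through the free operator, $D_\phi+\ui\omega=\bigl(1+\beta\phi(D_0+\ui\omega)^{-1}\bigr)(D_0+\ui\omega)$, so that $U$ becomes the preimage of the open group of invertible operators under the continuous affine map $\phi\mapsto 1+\beta\phi D_0^{-1}$, and smoothness of $\phi\mapsto(D_\phi+\ui\omega)^{-1}$ follows by composing this affine map with the inversion map $T\mapsto T^{-1}$, which is analytic on the set of invertibles. The paper instead perturbs around each fixed $\phi\in U$: it writes $D_\phi+\beta\xi=(1+\beta\xi D_\phi^{-1})D_\phi$, gets openness from a Neumann series for small $\norm{\xi}_{H^1}$, and then uses the resolvent identity~\eqref{eq:res_form} to compute the difference quotient explicitly, obtaining the derivative $\zeta\mapsto-(D_\phi\pm\ui\omega)^{-1}\beta\zeta(D_\phi\pm\ui\omega)^{-1}$ and iterating for higher orders. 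The one delicate point of your route --- passing from $0\notin\sigma(D_\phi)$, a statement about an unbounded operator, to invertibility of the bounded operator $1+K(\phi,\omega)$ --- you identify correctly, and it is settled cleanly: $1+K(\phi,\omega)=(D_\phi+\ui\omega)(D_0+\ui\omega)^{-1}$ is a bounded bijection of $L^2$ whenever $D_\phi+\ui\omega$ is bijective from $H^1$ to $L^2$, hence boundedly invertible by the inverse mapping theorem. What your approach buys is structural economy and the stronger conclusion of real-analyticity; what the paper's more pedestrian computation buys is the explicit identity~\eqref{eq:res_form} and the explicit derivative formulas, which are quoted again in the proof of Proposition~\ref{prop:Evac} (to compute $\tfrac{\ud^j}{\ud s^j}\big\vert_{s=0}(D_{s\phi}\pm\ui\omega)^{-1}$ and to bound the derivatives of $I(\omega,\phi)$). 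Both proofs rest on the same two ingredients --- the mapping property of Lemma~\ref{lem:resolvent1} (respectively the bound $\norm{\beta\xi D_\phi^{-1}}\leq C\norm{\xi}_{H^1}$) and a Neumann series --- so the difference is one of organisation rather than substance, but it is a genuine and legitimate alternative.
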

\begin{proof}
If $\phi\in U$ and $\xi \in H^1(\R^3)$ has sufficiently small norm, then $\beta \xi D_\phi^{-1}$ has operator norm less than one. Hence $D_\phi +\beta \xi$ is invertible by a Neumann series and $U$ is open. Then, the resolvent formula
 \begin{equation}\label{eq:res_form}
  (D_\phi +\beta \xi\pm \ui \omega)^{-1}-(D_\phi\pm \ui \omega)^{-1}=-(D_\phi +\beta \xi\pm \ui \omega)^{-1}\beta \xi(D_\phi\pm \ui \omega)^{-1}
 \end{equation}
holds and implies continuity. Letting $\xi=s\zeta$ this allows us to calculate the derivative
\begin{align}
  \langle\mathrm{Diff}((D_\phi\pm \ui \omega)^{-1}),\zeta\rangle
&=\lim_{s\to 0} \tfrac1s \Big((D_\phi +\beta s\zeta\pm \ui \omega)^{-1}-(D_\phi\pm \ui \omega)^{-1}\Big)\notag\\
  &=-(D_\phi \pm \ui \omega)^{-1}\beta \zeta(D_\phi\pm \ui \omega)^{-1}\,,\notag
\end{align}
which is obviously also continuous. Using~\eqref{eq:res_form} again shows that convergence to the derivative holds also in the operator norm, and the fact that the derivative is a product of differentiable functions gives smoothness.
\end{proof}
%
%
%
\begin{proof}[Proof of Proposition~\ref{prop:Evac}]
Consider $\abs{D_{s \phi}}$ given by equation~\eqref{eq:abs_int}. By Lemma~\ref{lem:resolvent2} the integrand is a smooth function of $s$ in a neighbourhood of $s=0$. One easily checks that its derivatives at $s=0$ are elements of $L^1\left((\R, \ud \omega), \mathscr{L}(H^2, L^2)\right)$, so we can exchange integration and differentiation w.r.t.~$s$, which establishes equality of the formula for $\mathcal{E}_{\mathrm{vac}}$ given in the Proposition with~\eqref{eq:Edef}, in the sense that the expressions inside the trace are equal. In order to show that this expression is an operator of trace-class we first calculate
 \begin{equation*}
 \frac{\ud^j}{\ud s^j}\bigg\vert_{s=0}\left(D_{s\phi} \pm \ui \omega\right)^{-1}
 =(-1)^j j!  \left(D_0 \pm \ui \omega\right)^{-1}
\big(\beta \phi \left(D_0 \pm \ui \omega\right)^{-1}\big)^j\,.
\end{equation*}
Then, using~\eqref{eq:abs_int} and the resolvent formula~\eqref{eq:res_form}, we can write the sum of the integrands as $\tfrac12 (I(\omega, \phi) + I(-\omega, \phi))$ with
\begin{align}
I(\omega, \phi):&=-\ui \omega (D_\phi + \ui \omega)^{-1} + \ui \omega \sum_{j=0}^4
 \frac{1}{j!} \frac{\ud^j}{\ud s^j}\bigg\vert_{s=0}(D_{s\phi} + \ui \omega)^{-1}\notag\\
 &=\ui \omega (D_\phi + \ui \omega)^{-1} \big(\beta \phi \left(D_0 + \ui \omega\right)^{-1}\big)^5\label{eq:E integrand}
\,.
\end{align}
Using the estimate~\eqref{eq:KSS cont} with $q=5$ we find
\begin{align}
 \norm{I(\omega, \phi)}_{\mathfrak{S}_1} &\leq  
\norm{\ui \omega(D_\phi + \ui \omega)^{-1}}_{\mathfrak{S}_\infty}\norm{\abs{(\beta \phi \left(D_0 + \ui \omega\right)^{-1}}^5}_{\mathfrak{S}_1}\notag\\
 &\leq C(1+\omega^2)^{-1} \eps^{-3} \norm{\phi}_{H^1}^5\,.\label{eq:Ibound}
\end{align}
This shows that $I(\omega, \phi)\in L^1(\R, \mathfrak{S}_1)$ for all $\phi \in H^1$, so the expression~\eqref{eq:Edef} is indeed well-defined. 

Continuity in $\phi$ is an immediate consequence of Lemma~\ref{lem:resolvent1} and the Dominated Convergence Theorem. 
Since $\langle \mathrm{Diff}(\beta \phi(D_0 +\ui \omega)^{-1})), \xi\rangle= \beta \xi(D_0 +\ui \omega)^{-1})$ the derivatives of $I(\omega, \phi)$ also satisfy an estimate of the form~\eqref{eq:Ibound}, by the H\"older inequality for traces. Hence smoothness of $\mathcal{E}_{\mathrm{vac}}$ also follows from dominated convergence.
\end{proof}
\section{The semi-classical expansion of the vacuum energy}\label{sect:semicl}
In this section we derive the formula~\eqref{eq:Walecka} as the leading term in an expansion of $\mathcal{E}_{\mathrm{vac}}$ in powers of $\eps$.
The semi-classical expansion is naturally also an expansion in orders of derivatives acting on $\phi$. As such, it reproduces the \enquote{derivative expansions} discussed in the physics literature~\cite{Blunden-90, Chan-86, Cheyette-85, Perry-86}.
For this reason we make very strong smoothness assumptions on the field $\phi$ for its derivation. These can be somewhat relaxed if one is only interested in a specific order of the expansion. 

We will also need to assume that $\abs{\phi}< 1$. The lower bound $\phi>-1$ is certainly necessary due to the logarithm in formula~\eqref{eq:Walecka}, and the upper bound is needed to obtain the following series expansion:
\begin{equation}\label{eq:I Neumann}
  \int \Tr I(\omega,\phi)\,\ud \omega =
\int \Tr\Big( \sum_{j=5}^\infty \frac{(-1)^{j+1}}{j} \big(\beta\phi(D_0+\ui \omega)^{-1}\big)^j\Big)\, \ud \omega\,.
\end{equation}
To prove this formula, we use that $\norm{\beta\phi(D_0+\ui\omega)^{-1}}_{\mathfrak{S}_\infty} < 1$ and write
\begin{align*}
\Tr\left[I(\omega, \phi)\right]
&=\ui\omega \Tr\Big[\big(1+\beta\phi(D_0+\ui \omega)^{-1}\big)^{-1}(D_0+\ui \omega)^{-1}\big(\beta\phi(D_0+\ui \omega)^{-1}\big)^5\Big]\\
&=\ui\omega \Tr\Big[(D_0+\ui \omega)^{-1}\big(\beta\phi(D_0+\ui \omega)^{-1}\big)^5\big(1+\beta\phi(D_0+\ui \omega)^{-1}\big)^{-1}\Big]\\
&=\ui\omega \sum_{j=0}^\infty (-1)^j \Tr\Big[(D_0+\ui \omega)^{-1}\big(\beta\phi(D_0+\ui \omega)^{-1}\big)^{j+5}\Big]\,.
\end{align*}
Then we note that
\begin{equation*}
\omega \frac{\ud}{\ud \omega} \Tr\big(\beta\phi(D_0+\ui \omega)^{-1}\big)^j
=-\ui \omega j \Tr\Big[(D_0+\ui \omega)^{-1}\big(\beta\phi(D_0+\ui \omega)^{-1}\big)^j\Big]\,,
\end{equation*}
 and integrate by parts, which is justified by the KSS-inequality~\eqref{eq:KSS} and gives~\eqref{eq:I Neumann}.
Now $\sum_{j=5}^\infty \frac{(-1)^{j+1}}{j} x^j=\log(1+x)-r(x)$, with a polynomial of order four $r(x)$. As $I(-\omega)=I(\omega)^*$ we then find
\begin{align}\label{eq:log}
 \mathcal{E}_{\mathrm{vac}}(\phi)=-\frac{1}{4\pi} \int &\Tr\Big[\log\big(1+\beta\phi (D_0+\ui \omega)^{-1}\big)+ r\big(\beta\phi (D_0+\ui \omega)^{-1}\big)\Big]\\
 &+\Tr\Big[\log\big(1+(D_0-\ui \omega)^{-1}\beta\phi \big)+r\big( (D_0-\ui \omega)^{-1}\beta\phi\big)\Big]\,\ud \omega\,.\notag
\end{align}
Under sufficiently strong assumptions on $\phi$ we can represent this quantity as an asymptotic series in $\eps$, using techniques of semi-classical pseudo-differential calculus.
We will mainly refer to~\cite{DimSjo-99} for results on this topic. This book treats only scalar-valued symbols, as opposed to the $\C^{4\times 4}$ valued symbols we use, so we will indicate in which sense the results there translate
to our setting, with only minor modifications of the proofs.
\begin{thm}\label{thm:semicl}
Let $\phi\in \mathscr{C}^\infty$ be such that for every multiindex $\alpha\in \N^3$ there exists a constant $c_\alpha$, with $c_0<1$, satisfying
\begin{equation*}
 \left\vert\frac{\partial^{\abs{\alpha}}}{\partial x^\alpha}\phi(x)\right\vert
 \leq c_\alpha (1+\abs{x}^2)^{-(1+\abs{\alpha})/2}\,.
\end{equation*}
Then $\cE_{\mathrm{vac}}(\phi)$ admits an asymptotic expansion at $\eps=0$ of the form
\begin{equation*}
 \eps^3\mathcal{E}_{\mathrm{vac}}(\phi)=\int_{\R^3}\sum_{k=0}^n \eps^k\mathcal{V}_k(\phi)(x)\,\ud x + \mathcal{O}(\eps^{n+1})\,,
\end{equation*}
where $\mathcal{V}_n(\phi)(x)$ depends on the derivatives of order at most $n$ of $\phi$ at $x$ and 
\begin{align*}
 \mathcal{V}_0(\phi)(x)& =-\frac{1}{4 \pi^2}\big(1+\phi(x)\big)^4\log(1+\phi(x)) -P(\phi(x)) \\
 \mathcal{V}_1(\phi)(x)&=0\,.
\end{align*}
\end{thm}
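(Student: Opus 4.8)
The starting point of my plan is the representation~\eqref{eq:log}, which writes $\cE_{\mathrm{vac}}(\phi)$ as an $\omega$-integral of the trace of $\log\big(1+\beta\phi(D_0+\ui\omega)^{-1}\big)$, its adjoint counterpart, and the subtracted polynomial $r$. I would treat $A_\omega:=\beta\phi(D_0+\ui\omega)^{-1}$ as a semiclassical pseudodifferential operator with $\C^{4\times4}$-valued Weyl symbol, compute the symbol of $\log(1+A_\omega)-r(A_\omega)$ as an asymptotic series in $\eps$, and then apply the semiclassical trace expansion $\Tr\Opeps{b}=(2\pi\eps)^{-3}\int_{\R^3}\int_{\R^3}\tr_{\C^4}b(x,\xi)\,\ud x\,\ud\xi+\cdots$ before integrating over $\omega$. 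Since $\norm{A_\omega}_{\mathfrak{S}_\infty}<1$ by the hypothesis $\abs\phi<1$, the logarithm is given by a convergent Neumann-type series, so its symbol can be built from the Moyal calculus applied to the symbol of $A_\omega$; subtracting $r(A_\omega)$ removes precisely the contributions of order $\le4$ in $\phi$, leaving a family of trace-class operators.

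The first step is to identify the symbol. The free operator has Weyl symbol $\alpha\cdot\xi+\beta$, and since $(\alpha\cdot\xi+\beta)^2=\abs\xi^2+1$ the principal symbol of $A_\omega$ is $a_\omega(x,\xi)=\beta\phi(x)(\alpha\cdot\xi+\beta-\ui\omega)/(\abs\xi^2+1+\omega^2)$. To obtain the leading term I need the pointwise matrix trace $\tr_{\C^4}\log(1+a_\omega)=\log\det(1+a_\omega)$. The matrix $\beta(\alpha\cdot\xi+\beta-\ui\omega)=\beta\alpha\cdot\xi+1-\ui\omega\beta$ has, by the Clifford relations, the two eigenvalues $1\pm\ui\sqrt{\abs\xi^2+\omega^2}$, each doubly degenerate; a short computation then gives $\det(1+a_\omega)=\big(1+(2\phi+\phi^2)/(\abs\xi^2+1+\omega^2)\big)^2$ and hence $\tr_{\C^4}\log(1+a_\omega)=2\log\frac{\abs\xi^2+\omega^2+(1+\phi)^2}{\abs\xi^2+\omega^2+1}$.

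I would then carry out the $\omega$- and $\xi$-integrations. Using $\int_\R\log\frac{a^2+\omega^2}{c^2+\omega^2}\,\ud\omega=2\pi(a-c)$ the $\omega$-integral reduces the problem to a radial integral over $\xi\in\R^3$ of $\sqrt{\abs\xi^2+(1+\phi)^2}$, whose only non-polynomial part in $\phi$ is proportional to $(1+\phi)^4\log(1+\phi)$. Since the subtraction of $r$ removes exactly the fourth-order Taylor polynomial in $\phi$, the finite remainder is $(1+\phi)^4\log(1+\phi)$ minus its fourth-order Taylor polynomial in $\phi$, and collecting the prefactor $(2\pi)^{-3}$ together with the constant in~\eqref{eq:log} yields the stated leading term $\cV_0=\cV$. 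The identity $\cV_1=0$ I would obtain from the parity structure of the Weyl calculus: the order-$\eps$ correction to the symbol is a sum of Poisson-bracket-type terms $\tr_{\C^4}(\partial_\xi b\,\partial_x c-\partial_x b\,\partial_\xi c)$, whose integral over phase space vanishes by integration by parts and cyclicity of $\tr_{\C^4}$, so that only even powers of $\eps$ survive.

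The step I expect to be the main obstacle is making the pseudodifferential machinery rigorous in the present setting. The reference~\cite{DimSjo-99} treats only scalar symbols, so I must extend the functional calculus and, above all, the trace expansion to the matrix-valued, $\omega$-dependent symbol class arising here; and I must control the remainder of the symbol expansion in $\mathfrak{S}_1$ \emph{uniformly in} $\omega\in\R$, with enough joint decay in $(\xi,\omega)$ to justify interchanging the asymptotic expansion with $\int_\R\ud\omega$ and to guarantee that each remainder is genuinely $\mathcal O(\eps^{n+1})$ after integration. The weighted Schatten estimates behind Lemma~\ref{lem:resolvent1} and the KSS inequality~\eqref{eq:KSS} are the natural tools, now applied to the resolvent powers at fixed $\omega$, but the bookkeeping of the $\omega$-weights is the delicate part.
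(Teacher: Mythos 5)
Your plan follows the paper's proof in outline: same starting point \eqref{eq:log}, same idea of treating $A_\omega=\beta\phi(D_0+\ui\omega)^{-1}$ as the $\eps$-Weyl quantisation of a matrix-valued, $\omega$-dependent symbol, same semiclassical trace formula, followed by the $\omega$-integration. Your eigenvalue/determinant evaluation of $\tr_{\C^4}\log(1+a_0)$ is correct and is an attractive variant of the paper's computation of $(1+a_0)(1+a_0^*)$; the remaining differences (Neumann series instead of the paper's Cauchy-integral functional calculus over the contour $\abs{z}=(1+c_0)/2$; performing the $\omega$-integral before the $p$-integral) are matters of route. Two steps, however, have genuine gaps.

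First, your argument for $\cV_1=0$ does not work for matrix-valued symbols. The \enquote{parity} rule that odd powers of $\eps$ cancel rests on identities valid in the scalar case (e.g.\ that reversing a Moyal product is the same as flipping the sign of $\eps$), and the identity $\int\tr_{\C^4}(\partial_p b\,\partial_x c-\partial_x b\,\partial_p c)\,\ud x\,\ud p=0$ holds only for \emph{two}-factor brackets. The order-$\eps$ part of the symbol of $F(A_\omega)$ is not of that form: it contains multi-factor terms of the type $\tr_{\C^4}\big[(\partial_p a_0)\,a_0^m\,(\partial_x a_0)\,a_0^n\big]$ (in the paper's notation, the two terms of $b_1$), and for non-commuting factors neither cyclicity nor integration by parts forces their phase-space integral to vanish; already $\int\tr\big[a\,(\partial_p a\,\partial_x a-\partial_x a\,\partial_p a)\big]$ is non-zero for generic matrix symbols. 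The paper's proof of $\cV_1=0$ is structure-specific: it uses that $\nabla_x a_0=(\nabla\log\phi)\,a_0$ is a \emph{scalar} multiple of $a_0$, hence commutes with it, which kills the $\C^4$-trace of the Poisson-bracket term, and it rewrites $\tr_{\C^4}[a_1F'(a_0)]$, with $a_1=-\tfrac{\ui}{2}\nabla_x\nabla_p a_0$, as a total $p$-derivative. Your stronger claim that all odd powers of $\eps$ drop out is unsupported and is not asserted by the paper either.

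Second, the constant in $\cV_0$ is asserted rather than computed, and this is precisely the delicate point. Carrying your (correct) reduction to the end, the logarithmic part becomes $-\tfrac{2}{(2\pi)^3}\int\big(\sqrt{p^2+(1+\phi)^2}-\cdots\big)\ud p$, whose renormalised value is $-\tfrac{1}{8\pi^2}(1+\phi)^4\log(1+\phi)$ modulo fourth-order polynomials, i.e.\ \emph{one half} of the constant in the statement you set out to prove. Note that the paper reaches the stated $-\tfrac{1}{4\pi^2}$ through the summation
\begin{equation*}
S(v)=\sum_{j\geq3}\tfrac{(-1)^{j+1}}{j(j-1)(j-2)}v^j,\qquad v=2\phi+\phi^2,
\end{equation*}
claimed there to equal $2(1+\phi)^4\log(1+\phi)$ up to a polynomial; but since $S''(v)=\log(1+v)$ with $S(0)=S'(0)=0$, one gets $S(v)=\tfrac12(1+v)^2\log(1+v)+\mathrm{poly}=(1+\phi)^4\log(1+\phi)+\mathrm{poly}$, in agreement with your route. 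So there is a factor-of-two tension between the two evaluations and the stated theorem that your write-up silently papers over with \enquote{collecting the prefactor \dots yields $\cV_0=\cV$}; this bookkeeping must be done explicitly, also keeping the counterterms inside the $\omega$- and $p$-integrals, since the logarithm and the counterterms are not separately integrable in $p$. Finally, a more minor point: building the symbol of $\log(1+A_\omega)$ by applying the Moyal calculus term by term to the Neumann series requires summing infinitely many composition remainders whose seminorm constants grow with the number of factors; the paper's resolvent symbols $(a_0-z)^{-1}$ and Cauchy's formula circumvent exactly this, and you would be well advised to adopt that device rather than fight the summation.
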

\begin{proof}
In order to keep track of the dependence on $\omega$ we will use a pseudo-differential calculus depending on $\omega$ as an auxiliary parameter. That is, we define the symbol class $\mathcal{S}^m$ to be those $a\in \mathscr{C}^\infty(\R^3\times \R^3\times \R, \C^{4\times 4})$ satisfying
\begin{equation*}
 \Big\Vert\frac{\partial^\alpha}{\partial x^\alpha} \frac{\partial^\beta}{\partial p^\beta} a(x,p,\omega)\Big\Vert_{\C^{4\times 4}}
\leq C_{\alpha, \beta} \left(1+\abs{x}^2\right)^{\tfrac{m-\abs{\alpha}}{2}} \left(1+\abs{p}^2+\omega^2\right)^{\tfrac{m-\abs{\beta}}{2}}\,,
\end{equation*}
for all multiindices $\alpha, \beta \in \N^3$ and appropriate constants $C_{\alpha, \beta}$. An $\eps$-admissible symbol of order $m\in \mathbb{Z}$ is then a function $\eps\mapsto a(\eps)$ in $\mathscr{C}^\infty([0,\eps_0),\mathcal{S}^m)$ with $a_N:=\tfrac{\ud^N}{\ud \eps^N}\big\vert_{\eps=0}a\in \mathcal{S}^{m-N}$, for which the remainder term of the Taylor expansion of order $N$ at $\eps=0$ is of order $\eps^{N+1}$ in the Fréchét topology of $\mathcal{S}^{m-N-1}$.
Let $A=\beta\phi(x) (D_0 +\ui \omega)^{-1}$. Because $(D_0+\ui\omega)^{-1}$ is the quantisation of $(\alpha\cdot p +\beta +\ui \omega)^{-1}$ there is an $\eps$-admissible symbol $a(\eps)$ of order $-1$ such that $A$ equals the $\eps$-Weyl-quantisation $\Opeps{a}$, i.e.
\begin{equation*}
(Af)(x)=\frac{1}{(2\pi \eps)^3}\int_{\R^6} \ue^{\ui\eps^{-1} \langle x-y, p\rangle} a\big((x+y)/2, p, \omega, \eps\big)f(y)\,\ud y\ud p\,,
\end{equation*}
for every Schwartz-function $f$ (as follows from inspection of~\cite[Proposition 7.7]{DimSjo-99}). The leading term is
\begin{equation*}
 a_0=\beta\phi(x)(\alpha\cdot p +\beta +\ui \omega)^{-1}\in\mathcal{S}^{-1}\,.
\end{equation*}
 Because $\norm{\phi}_\infty<c_0<1$, $A$ has norm less than $c_0$. Since $F(z)=\log(1+z)-r(z)$ is a holomorphic function on the open unit disk, $F(A)$ can be computed using Cauchy's formula. Let $\gamma:=\lbrace z:\abs{z}=(1+c_0)/2\rbrace$. Then we have for $z\in \gamma$
 \begin{equation*}
  \norm{(a_0(x,p)-z)^{-1}}_{\C^{4\times 4}}\leq \frac{2}{1-c_0}\,,
 \end{equation*}
so $(a_0(x,p)-z)^{-1}\in \mathcal{S}^0$, with bounds independent of $z$. This implies that $(A-z)^{-1}=\Opeps{b(z)}$, for an admissible symbol $b(z)$ of order zero, with $b_0(z)= (a_0(x,p)-z)^{-1}$ (cf.~\cite[(8.11)]{DimSjo-99}, note that the correct generalisation of a lower bound on a scalar symbol is an upper bound on the inverse). Thus
\begin{equation*}
 F(A)=\frac{1}{2\pi \ui}  \Opeps{\int_\gamma F(z)b(z)\,\ud z}\,,
\end{equation*}
so $F(A)$ itself is the quantisation of an admissible symbol $f$ of order zero. Its expansion is given by
\begin{equation*}
 f_n=\frac{1}{2\pi \ui}  \int_\gamma F(z)b_n(z)\,\ud z\,,
\end{equation*}
in particular $f_0=F(a_0)$ (the formulas for the higher order terms will differ from those for scalar-valued symbols). We now use this to evaluate the trace and the integral over $\omega$, order-by-order in $n$. For $n=0$ we have
\begin{equation*}
 f_0=F(a_0)=a_0^5\sum_{j=0}^\infty \frac{(-1)^{j}}{j+5}a_0^j\in \mathcal{S}^{-5}\,,
\end{equation*}
since the sum is absolutely convergent. This implies that $f_0$ and its derivatives are elements of $L^1(\R^6, \C^{4\times 4})$, so (cf.~\cite[Theorem 9.4]{DimSjo-99}) $\Opeps{f_0}\in \mathfrak{S}_1$ and
\begin{equation*}
 \Tr[\Opeps{f_0}]=\frac{1}{(2\pi \eps)^3} \int_{\R^6} \tr_{\C^4}[f_0(x,p)] \,\ud p\ud x\,.                                                                                                                                                                                                                                                                                                                                                  \end{equation*}
 The leading order in the expansion of $\mathcal{V}$ is thus given by
 \begin{align}
  \mathcal{V}_0(\phi)&=-\frac{1}{4\pi(2\pi)^3}\int_{\R^3}\int_{\R^3}\tr_{\C^4}\big[F(a_0(x,p))]+ F(a_0(x,p)^*)\big]\,\ud p\ud \omega\notag\\
  &=-\frac{1}{4\pi(2\pi)^3} \int\tr_{\C^4}\big[\log(1+a_0)(1+a_0^*)- r(a_0)-r(a_0^*)\big] \,\ud p \ud \omega\,.\label{eq:TrLog1}
 \end{align}
 Now,
 \begin{align*}
 (1+a_0)(1+a_0^*)&=1+a_0+a_0^* + a_0a_0^*\\
%
&=1 + \big(2\phi(x)+\phi(x)^2\big)(p^2+1+\omega^2)^{-1}\,,
\end{align*}
so the $\C^4$-trace is trivial and expanding into power series and setting $\xi=(p,\omega)\in\R^4$ we obtain:
\begin{align*}
 \mathcal{V}_0(\phi)=-\frac{1}{8 \pi^4}\int_{\R^4}\sum_{j=3}^\infty \frac{(-1)^{j+1}}{j}(\xi^2+1)^{-j}(2\phi(x)+\phi^2(x))^j - r'(\phi(x),\xi)\,\ud \xi \,.
\end{align*}
The new remainder term $r'$ is a polynomial of fourth order in $\phi(x)$, and it is integrable in $\xi$, since the first term and the total integrand both are. The $\xi$-integral of the first term evaluates to
\begin{equation*}
 \int_{\R^4}(\xi^2+1)^{-j}\,\ud \xi=\frac{\pi^2}{(j-1)(j-2)}\,,
\end{equation*}
and the sum then yields
\begin{align*}
 \sum_{j=3}^\infty &\frac{(-1)^{j+1}}{j(j-1)(j-2)}(2\phi(x)+\phi^2(x))^j\\
 &= (1+2\phi(x)+\phi^2(x))\log(1+2\phi(x)+\phi^2(x))-r''(\phi(x))\\
 &=2 (1+\phi(x))^4\log(1+\phi(x))-r''(\phi(x))\,,
\end{align*}
with another fourth-order polynomial $r''$. With this we obtain
\begin{equation*}
  \mathcal{V}_0(\phi)=-\frac{1}{4 \pi^2} (1+\phi(x))^4\log(1+\phi(x))+\underbrace{ r''(\phi(x))+\Big(\int_{\R^4} r'(\phi(x),\xi)\,\ud \xi\Big)}_{=:-P(\phi(x))}\,.
\end{equation*}
By construction, $P$ is a fourth order polynomial and since the first four derivatives of the integrand in equation~\eqref{eq:TrLog1} w.r.t. $\phi$ vanish, $P(t)$ is the Taylor polynomial of $-\tfrac{1}{4\pi^2}(1+t)^4\log(1+t)$.

We have thus found the claimed leading order term for $\mathcal{E}_{\mathrm{vac}}(\phi)$ and it remains to estimate the error.
First, note that $f-\sum_{n=0}^4 \eps^n f_n \in \mathcal{S}^{-5}$ and that this implies a bound in $L^1(\R, \mathfrak{S}_1)$ on the quantisation, because
\begin{equation*}
 \Opeps{(1+x^2)^{-5/2}(1+p^2+\omega^2)^{-5/2}}\in L^1(\R, \mathfrak{S}_1)\,,
\end{equation*}
and for every $\sigma\in \mathcal{S}^{-5}$
\begin{equation*}
 \Big\Vert\Opeps{(1+x^2)^{-5/2}(1+p^2+\omega^2)^{-5/2}}\Opeps{\sigma}\Big\Vert_\infty \leq C\,,
\end{equation*}
by the Calderon-Vaillancourt Theorem~\cite[Theorem 7.11]{DimSjo-99}.
This proves that the expansion of $f$ gives an expansion of $\cE_{\mathrm{vac}}(\phi)$ of the claimed form, with
\begin{equation*}
 \mathcal{V}_n(\phi)(x):=\int_\R \int_{\R^3}\tr_{\C^4}[f_n(x,p,\omega)]\,\ud p \ud \omega\,,
\end{equation*}
for the terms of order at least five.
It thus remains to estimate $f_n$ for $1\leq n\leq 4$. For brevity, we will only perform the proof for $f_1$, as the arguments for higher order terms are essentially the same. The explicit form of $b_1$, and thereby $f_1$, is obtained from the formula for the Weyl-Moyal product (see 
~\cite[Theorem 7.3]{DimSjo-99}), which represents the product of operators on the space of admissible symbols. Denoting this product by $\circ$ and by $\lbrace \cdot, \cdot\rbrace$ the Poisson-bracket we have
\begin{align*}
 b_1(z)&=-b_0(z)a_1 b_0(z) - (b_0(z) \circ (a_0-z))_1 b_0(z)\\
&=-(a_0-z)^{-1}a_1(a_0-z)^{-1} - \tfrac{\ui}{2}\lbrace(a_0-z)^{-1}, a_0\rbrace (a_0-z)^{-1}\,.
\end{align*}
Now for every $\sigma\in \mathcal{S}^m$ we have
\begin{equation*}
 [\sigma, (a_0-z)^{-1}]=-(a_0-z)^{-1} [\sigma, a_0] (a_0-z)^{-1} \in \mathcal{S}^{m-1}\,,
\end{equation*}
so the first term of $b_1$, for example, can be written as
\begin{equation*}
b_0(z)a_1b_0(z)=a_1 b_0(z)^{-2} +[a_0, a_1] b_0(z)^{-3} + [a_0, [a_0, a_1]] b_0(z)^{-4}+\mathcal{S}^{-5}\,.
\end{equation*}
On the other hand, we have
\begin{align*}
 \frac{1}{2\pi \ui}  \int_\gamma F(z) \sigma (a_0-z)^{-\ell}\,\ud z
&= \sigma \frac{1}{2\pi \ui}\int_\gamma F(z)\frac{1}{(\ell-1)!}\frac{\ud^{\ell-1}}{\ud z^{\ell-1}} (a_0-z)^{-1}\,\ud z\\
&=\sigma \frac{(-1)^{\ell-1}}{(\ell-1)!}\left(\frac{\ud^{\ell-1}}{\ud z^{\ell-1}}F\right)(a_0)\,,
\end{align*}
 and since $\left(\frac{\ud^{\ell}}{\ud z^{\ell}}F\right)(a_0)\in \mathcal{S}^{-5+l}$ for $\ell\leq 5$ this implies that
\begin{equation*}
 \frac{1}{2\pi \ui}  \int_\gamma F(z)(a_0-z)^{-1}a_1(a_0-z)^{-1}\,\ud z\in \mathcal{S}^{-5}\,.
\end{equation*}
The same argument, applied to the second term of $b_1$, gives $f_1\in  \mathcal{S}^{-5}$, which proves the desired bound.

To complete the proof, we still need to check that $\mathcal{V}_1=0$.
We have
\begin{equation*}
\tr_{\C^4}\Big[\frac{1}{2\pi \ui}  \int_\gamma F(z)(a_0-z)^{-1}a_1(a_0-z)^{-1}\,\ud z\Big]=\tr_{\C^4}[a_1 F'(a_0)]
\end{equation*}
and
\begin{equation*}
 a_1(x,p)=\big(\beta\phi(x) \circ (\alpha p+ \beta +\ui \omega)^{-1}\big)_1=-\tfrac{\ui}{2}\lbrace \beta\phi, (\alpha p+ \beta +\ui \omega)^{-1}\rbrace=
-\tfrac{\ui}{2}\nabla_x \nabla_p a_0\,.
\end{equation*}
Thus, for every $x$ with $\phi(x)\neq 0$,
\begin{equation*}
 \tr_{\C^4}[a_1 F'(a_0)](x,p)=-\tfrac{\ui}{2}\nabla_p \tr_{\C^4}[(\nabla\log\phi(x)) F(a_0(x,p))]\,,
\end{equation*}
the $\ud p$-integral of which vanishes. For the second term of $b_1$, the $\C^4$-trace equals zero, because $\nabla_x a_0$ commutes with $a_0$ and thus
\begin{align*}
\tr_{\C^4}\big((\nabla_x (a_0-z)^{-1}) &(\nabla_p a_0)  (a_0-z)^{-1}\big)=\\
&=-\tr_{\C^4}\big((a_0-z)^{-1}(\nabla_x a_0) (a_0-z)^{-1} (\nabla_p a_0)  (a_0-z)^{-1}\big)\\
&=\tr_{\C^4}\big((a_0-z)^{-1}(\nabla_x a_0) (\nabla_p (a_0-z)^{-1})\big)\\
&=\tr_{\C^4}\big((\nabla_p (a_0-z)^{-1}) (\nabla_x a_0)(a_0-z)^{-1})  \big)\,.
\end{align*}
This concludes the proof of Theorem~\ref{thm:semicl}
\end{proof}
\section{Applications}
\subsection{An interacting Dirac Klein-Gordon model}\label{sect:app}
In this section we discuss the implications of the results of Section~\ref{sect:semicl} for a simple model of a particle, described by a Dirac equation, interacting with a classical field, described by a Klein-Gordon equation. Such models arise in the mean-field approximation of models in nuclear physics. They are often treated with the additional \enquote{no sea} approximation, which amounts to neglecting the contribution of negative energy states.
In such models, there are usually additional vector-fields to consider. However, due to gauge-invariance of the Dirac equation, the leading order of the vacuum energy is solely due to the scalar field. With the \enquote{no sea} approximation, the static model amounts to finding critical points of the energy
\begin{align*}
& \mathcal{E}_0\left(\psi, \sigma\right)\\
&:=\int_{\R^3} \overline\psi(x)\left(-\ui c \alpha \cdot \nabla + \beta ( c^2 + g \sigma(x))\right)\psi(x) + \tfrac12 c^2 \abs{\nabla \sigma(x)}^2 + \tfrac12 M^2c^4 \abs{\sigma(x)}^2\ud x\,,
\end{align*}
 under the constraint that $\psi\in L^2(\R^3, \C^4)$ be a normalised function in the positive spectral subspace of $-\ui c \alpha \cdot \nabla + \beta ( c^2 + g \phi(x))$, where we have introduced the physical constants, $c$ (speed of light), $M$ (mass of the field), and $g$ (particle-field coupling constant). If we rescale lengths by $\eps=\sqrt{c}$, choose large couplings of the form $g=g_0 \eps^{-7/2}$ and set $\phi(x)=\eps^{-3/2}g_0 \sigma(\eps x)$ this becomes
\begin{align*}
\mathcal{E}_0\left(\psi, \phi\right)
=\eps^{-4}\int_{\R^3} \overline\psi(x)&\left(-\ui\eps \alpha \cdot \nabla + \beta ( 1 + \eps^2 \phi(x))\right)\psi(x)\\
&+\tfrac12 g_0^{-2}\eps^2 \left(\eps^2 \abs{\nabla \phi(x)}^2 +  M^2 \abs{\phi(x)}^2\right)\,\ud x\,.
\end{align*}
Writing $\psi=(\varphi, \chi)$ with $\varphi, \chi\in H^1(\R^3, \C^2)$, the Euler-Lagrange equations with a Lagrange multiplier $1-\eps^2 \mu$ are given by
\begin{equation*}
 \left \lbrace
 \begin{aligned}
-\ui \eps \sigma\cdot \nabla \chi &=-\eps^2 \mu \varphi-\eps^2\phi\varphi\\
  -\ui \eps \sigma\cdot \nabla \varphi&=(2-\eps^2\mu+\eps^2\phi\varphi)  \chi\\
  -\eps^2 \Delta \phi + M^2\phi& = -g_0^2 (\varphi^2 - \chi^2)\,.
\end{aligned}
  \right.
\end{equation*}
For $\eps$ small enough, these equations are known~\cite[Theorem 3]{LewRot-14} to have a solution $(\varphi_\eps, \chi_\eps, \phi_\eps)$, for which
$
 \varphi_\eps \to v\psi_{\mathrm{NLS}}
$
in $H^2(\R^3,\C^2)$, where $v\in S^3\subset \C^2$ is an arbitrary vector and $\psi_{\mathrm{NLS}}$ is the unique positive solution to the non-linear Schr\"odinger equation
\begin{equation}\label{eq:NLS}
 -\tfrac12 \Delta\psi -\left(\tfrac{g_0}{M}\right)^2 \abs{\psi}^2 \psi + \mu\psi=0\,.
\end{equation}
By a scaling argument, there exists a unique $\mu$ for which $\norm{\psi_{\mathrm{NLS}}}_{L^2}=1$, and the corresponding solution $(\varphi_\eps, \chi_\eps, \phi_\eps)$ is a critical point of $\mathcal{E}_0$ satisfying the constraints.
If we apply the same procedure to the energy $\mathcal{E}$ including the contribution of the \enquote{sea} of negative energy states, defined as in Section~\ref{sect:reg}, we obtain
\begin{equation*}
\mathcal{E} (\psi, \phi)=\mathcal{E}_0 (\psi, \phi)+\eps^{-4}\mathcal{E}_{\mathrm{vac}}(\eps^2 \phi)\,.
\end{equation*}
Remember that we have shown in Proposition~\ref{prop:Evac} that $\mathcal{E}_{\mathrm{vac}}(\phi)$ is smooth in an $H^1$-neighbourhood of $\phi=0$, and that its first four derivatives at this point vanish. Hence the contribution of $ \mathcal{E}_{\mathrm{vac}}(\eps^2 \phi)$ will be small, of order $\eps^3$, for $\eps\to 0$, despite the pre-factor of $\eps^{-4}$. It is thus easy to show, by the same implicit-function argument used in~\cite{LewRot-14}, that this energy also has a critical point $(\psi_\eps, \phi_\eps)$, exhibiting the same limit, as $\eps\to 0$, as without the vacuum contribution.
For the term $\mathcal{E}_{\mathrm{vac}}(\eps^2 \phi)$ both an expansion around $\phi=0$ and the semi-classical expansion of Section~\ref{sect:semicl} are appropriate.
The second is justified by the fact that the solution to the NLS equation~\eqref{eq:NLS} is smooth and exponentially decreasing (see~\cite{Strauss-77}), so the hypothesis of Theorem~\ref{thm:semicl} are fulfilled in this limit.
\subsection{The stability of homogeneous systems}\label{sec:matter}\
In this section we discuss the effects of vacuum polarisation on the stability of homogeneous systems. For uniform nuclear matter, stability was already studied in the early works~\cite{ChiWal-74,Chin-76,Chin-77} and many later contributions, but with an emphasis on the dominant effect of the vector mesons and neglecting vacuum polarisation.

In this section we present formal derivations and numerics, we will indicate how some of these can be made rigorous.
We take the state of the system to be of uniform density and chemical potential $\mu>0$,
\begin{equation*}
 \gamma_\mu=\chi_{(-\infty, \mu]}(D_\phi)\,.
\end{equation*}
The energy of the positive energy states is then formally given by
\begin{equation*}
 \cE_0(\mu, \phi)=\frac{1}{2g^2} \int_{\R^3} \left(M^2 \abs{\phi}^2 + \abs{\nabla \phi}^2\right) \,\ud x + \Tr(\chi_{(0, \mu]}(D_\phi))\,,
\end{equation*}
where $M$ is the mass of the field and $g$ the coupling constant. The total energy is the sum $\cE_0(\mu, \phi) +\cE_{\mathrm{vac}}(\phi)$. 
For a constant field, the energy per unit volume becomes 
\begin{equation}\label{eq:En const}
 \cE(\mu, \phi)=\frac{M^2}{2 g^2} \abs{\phi}^2 + \frac{4}{(2\pi)^3} \int\limits_{p^2\leq \mu-(1+\phi)^2} \sqrt{p^2 + (1+\phi)^2} \,\ud p + \mathcal{V}(\phi)\,,
\end{equation}
with $\mathcal{V}(\phi)$ given by equation~\eqref{eq:Walecka}.

We define $\phi_{\mathrm{vac}}\in (-1,0]$ to be the local minimum of~\eqref{eq:En const} in this interval and $\phi_0\in (-1,0]$ to be that of $\cE-\cV$ (see Figure~\ref{fig:fields}). Note that $\phi_{\mathrm{vac}}$ is only a local minimum, as $\mathcal{V}$ is unbounded from below for positive values of $\phi$. We will now study the response of the field $\phi_{\mathrm{vac}}$ to a local perturbation $\eta$ of this system by calculating the response function $T$, which is formally given by
\begin{equation*}
\frac{\ud^2}{\ud \delta^2}\bigg\vert_{\delta=0} \mathcal{E}(\mu, \phi_{\mathrm{vac}} + \delta \eta)=\frac{1}{(2\pi)^d}\int_{\R^3} \abs{\hat\eta(p)}^2 T(p,\mu)\,\ud p\,,
\end{equation*}
and compare this to the corresponding results for $\phi_0$.
This expression is only formal since $\phi_{\mathrm{vac}} + \delta \eta\notin H^1$ and $\mathcal{E}_{\mathrm{vac}}(\phi_{\mathrm{vac}} + \delta \eta)$ is not defined. In order to give a proper definition, one should consider a system in a finite box and take the thermodynamic limit of the difference $\cE(\mu, \phi_{\mathrm{vac}} + \delta \eta)-\cE(\mu, \phi_{\mathrm{vac}})$, with $\eta\in H^1$.
We will not perform this construction here, the function $T$ will give a finite expression on the right hand side for sufficiently regular $\eta$.

The response of the vacuum can be obtained by derivation of the formula (cf.~\eqref{eq:log})
\begin{equation}\label{eq:log_resp}
 \mathcal{E}_{\mathrm{vac}}(\phi+\delta\eta)=-\frac{1}{2\pi}\int_\R \tr \Big(\log\big(1+\delta\beta\eta(D_0+\beta \phi +\ui \omega)^{-1}\big)-r(\phi+ \delta\eta, \omega)\Big)\,\ud \omega\,.
\end{equation}
Calculations similar to those of~\cite[Lemma 4.2]{GraHaiLewSer-13}, the details of which are given in the Appendix, give the result
\begin{align*}
 T_{\mathrm{vac}}(p)=&(2\pi)^3\mathcal{V}''(\phi_{\mathrm{vac}})+ 2\pi p^2 \int_0^1 u(1-u)\log\left(\frac{1+u(1-u)p^2}{(1+\phi_{\mathrm{vac}})^2 + u(1-u)p^2}\right)\,\ud u\\
 &+2\pi p^2 \int_0^1\begin{aligned}[t]&\frac{\phi_{\mathrm{vac}} u(1-u)}{(1+u(1-u)p^2)^2}
   \Big(2-6u+2up^2-8u^2p^2+6u^3p^2\\
  &+\phi_{\mathrm{vac}}(1+2up^2-7u^2p^2+10u^3p^2-5u^4p^2)\Big)\,\ud u\,.
 \end{aligned}
\end{align*}
The complicated form of this expression is largely due to the greater number of regularising terms as compared to~\cite{GraHaiLewSer-13}.
The response of the positive energy states is given by
\begin{align*}
 &-2 \pi \ui T_+(\mu, p)\\
 &=\int\limits_{c_\mu}\int\limits_{\R^{3}}\tr_{\C^{4}}\big[\beta(\alpha q + \beta(1+\phi_{\mathrm{vac}}) + z)^{-1}\beta(\alpha (q-p) + \beta(1+\phi_{\mathrm{vac}}) + z)^{-1}\big]\,\ud q \ud z\,,
\end{align*}
where $c_\mu$ is a path in the resolvent set of the matrices $\alpha p + \beta(1+\phi_{\mathrm{vac}})$ enclosing the interval $(0,\mu]$.
The total response function is then given by
\begin{equation*}
 T(\mu, p)= \frac{M^2}{2 g^2}  + \frac{p^2}{2g^2} +  T_+(\mu, p) + T_{\mathrm{vac}}(p)\,.
\end{equation*}

Numerical evaluation of these quantities, introducing the Fermi momentum $k_F:=\sqrt{\mu^2-(1+\phi_{\mathrm{vac}})^2}$ and choosing the parameters $g=10$ and $M=\tfrac12$ (this choice is suggested by~\cite[Table 3]{Reinhard-89}), gives the plots shown in Figure~\ref{fig:response}.

\begin{figure}
\begin{minipage}{0.45\textwidth}%
 \includegraphics[width=0.9\linewidth]{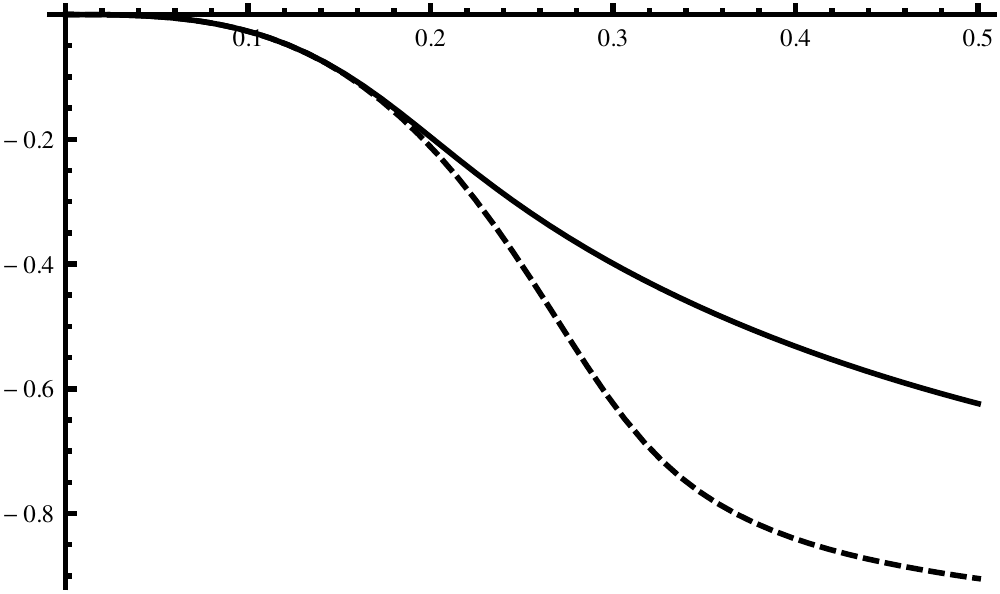}
 \centering
\end{minipage}
\begin{minipage}{0.45\textwidth}%
\centering
 \caption{\label{fig:fields} $\phi_{\mathrm{vac}}$ (solid line) and $\phi_0$ (dashed line) as functions of $k_F=\sqrt{\mu^2-(1+\phi_{0})^2}$ and $k_F=\sqrt{\mu^2-(1+\phi_{\mathrm{vac}})^2}$ (respectively).} 
 \end{minipage}
 \end{figure}
 \begin{figure}
\begin{minipage}{0.45\textwidth}%
\centering
 \includegraphics[width=0.9\linewidth]{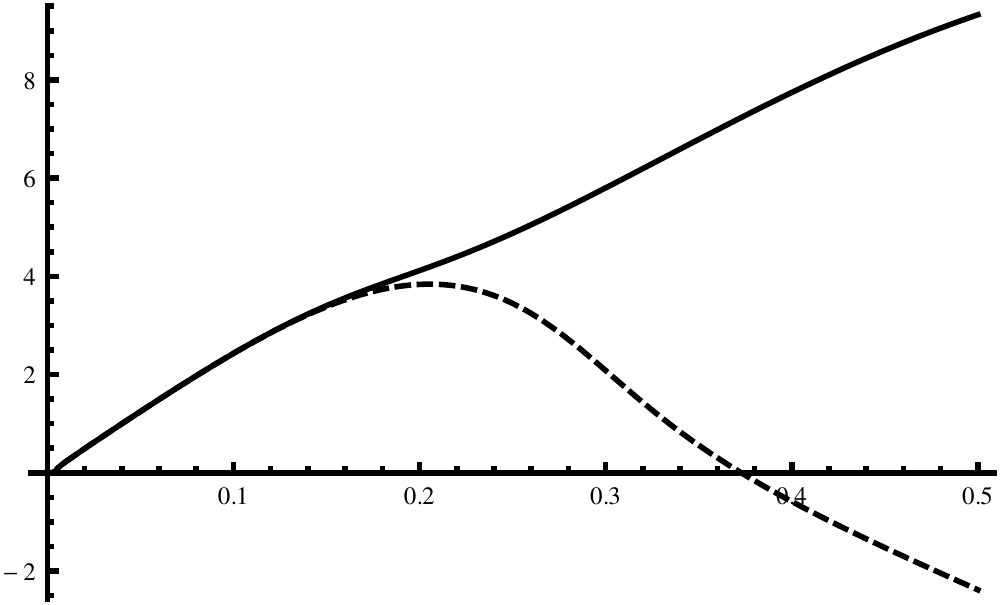}
\end{minipage}
\begin{minipage}{0.45\textwidth}%
 \includegraphics[width=0.9\linewidth]{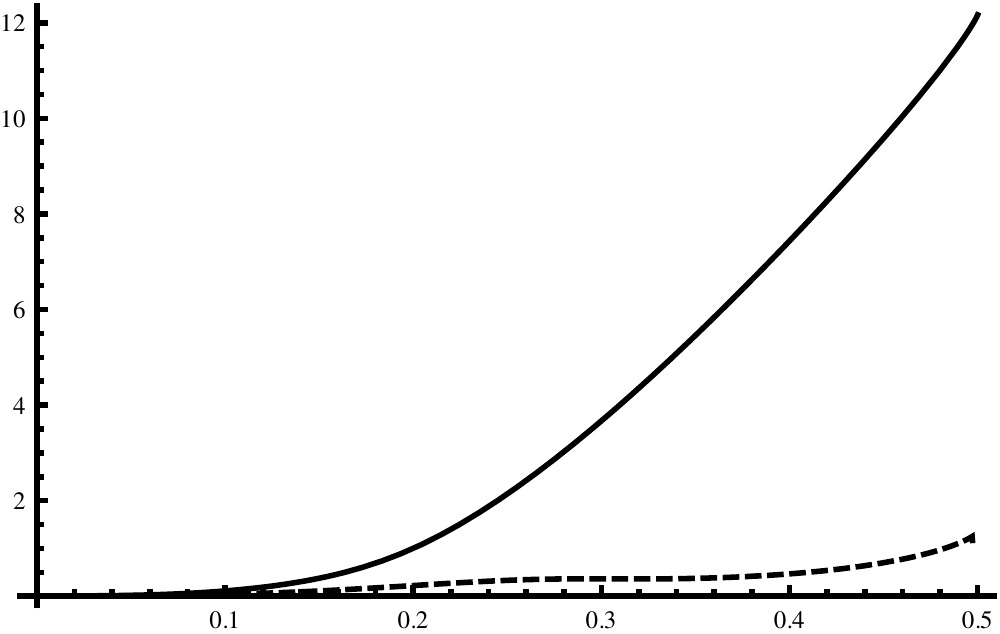} 
 \centering
\end{minipage}
\caption{\label{fig:response} Response functions $T(\mu, p)$ relative to $\phi_0$ (dashed line, without vacuum corrections) and $\phi_{\mathrm{vac}}$ (solid line, including these) as functions of $k_F$, at fixed $p=0.01$ (left) and  $p=1$ (right).}
\end{figure}
We observe that the vacuum polarisation has a twofold stabilising effect: First, it stops $\phi$ from reaching the point $-1$ where the gap in the spectrum of $D_\phi$ closes, and second it renders the minimizer $\phi$ more stable to local perturbations compared to $\phi_0$, which can be unstable at high Fermi momentum. It is an interesting open problem to make these observations rigorous.
\bigskip

\noindent\textbf{Acknowledgment.} We acknowledge financial support from the European Research Council under the European Community's Seventh Framework Programme (FP7/2007-2013 Grant Agreement MNIQS 258023). 
\appendix
\section*{Appendix: Calculation of the vacuum response function}
We start by taking the derivative of the integrand in~\eqref{eq:log_resp}, writing $\phi=\phi_{\mathrm{vac}}$,
\begin{equation*}
\frac{\ud^2}{\ud \delta^2}\bigg\vert_{\delta=0}\tr \Big(\log\big(1+\delta\beta\eta(D_0+\beta \phi +\ui \omega)^{-1}\big)-r(\phi + \delta\eta, \omega)\Big)
\end{equation*}
with
\begin{equation*}
r(\phi+\delta\eta, \omega)=\sum_{j=1}^{d+1}\frac{(-1)^{j+1}}{j}\Big( \beta(\phi + \delta\eta) (D_0 +\ui \omega)^{-1}\Big)^j\,.
\end{equation*}
We have
\begin{align*}
\frac{\ud^2}{\ud \delta^2}\bigg\vert_{\delta=0}\tr&\Big(\log\big(1+\delta\beta\eta(D_0+\beta \phi +\ui \omega)^{-1}\big)\Big)
%
%
=-\tr\Big(\beta\eta(D_0+\beta\phi +\ui \omega)^{-1})\Big)^2
\end{align*}
and
\begin{align*}
&\frac{\ud^2}{\ud \delta^2}\bigg\vert_{\delta=0}\tr \big(r(\phi+ \delta\eta, \omega)\big)\\
&=\begin{aligned}[t]&\tr\Big[-\big(\beta\eta(D+\ui \omega)^{-1}\big)^2+2\phi \beta (D+\ui \omega)^{-1}\big(\beta\eta(D+\ui \omega)^{-1}\big)^2\\
&-\phi^2\Big( \big(\beta(D+\ui \omega)^{-1}\beta\eta(D+\ui \omega)^{-1}\big)^2+2 \big(\beta (D+\ui \omega)^{-1}\big)^2\big(\beta\eta(D+\ui \omega)^{-1}\big)^2 \Big)\Big]\,,
\end{aligned}
\end{align*}
which is only formal as the terms are not individually trace-class, but can be made rigorous by inserting $\pm\tilde r(\delta \eta, \omega)$ which is the remainder for a Dirac operator $\widetilde D_0$ with mass $1+\phi$.

Now let $f,g\in \mathscr{C}(\R^3,\C^4\times \C^4)$ be sufficiently regular and decay at infinity. Then the operator $\eta g(D_0+\ui \omega)\eta f(D_0+\ui \omega)$ has a continuous integral kernel and
\begin{align}
 &\tr\big(\eta g(D_0+\ui \omega)\eta f(D_0+\ui \omega)\big)\notag\\
 %
 %
 &=\frac{1}{(2\pi)^{3}}\int \abs{\hat\eta(p)}^2\tr_{\C^4}\Big[g(\alpha q+\beta + \ui \omega) f(\alpha (q-p) +\beta +\ui \omega)\Big]\,\ud p\ud k\,.\label{eq:tr pseudo}
\end{align}
This gives us (writing $f(q)=(\alpha q + \beta +\ui \omega)^{-1})$ and $g(q)=(\alpha p +\beta(1+\phi) +\ui \omega)^{-1}$)
\begin{align*}
T_{\mathrm{vac}}(p)=\frac{1}{2\pi} \int \tr_{\C^4}
\Big[&\beta g(q)\beta g(q-p) - \beta f(q)\beta f(q-p)+2\phi (\beta f(q))^2 \beta f(q-p)\\
&-\phi^2 \Big((\beta f(q))^2 (\beta f(q-p))^2 + 2 (\beta f(q))^3 \beta f(q-p)\Big)\Big]\ud q \ud \omega\,. 
\end{align*}
Using that $g(q)=(\alpha q + \beta(1+\phi) -\ui \omega)/(q^2+(1+\phi)^2+\omega^2)$ the spinor trace of the first term can be evaluated, yielding
\begin{equation}\label{eq:tr 1+phi}
 \tr_{\C^4}\big[\beta g(q)\beta g(q-p)\big]
 =\frac{4(1+\phi)^2 -4 \omega^2+4qp-4q^2}{(q^2+(1+\phi)^2+\omega^2)(q-p)^2+(1+\phi)^2+\omega^2)}\,.
\end{equation}
The first term involving $f$ gives the same result with \enquote{mass} one instead of $1+\phi$, and for the remaining ones we have
\begin{equation*}
 \Tr_{\C^4}\big[\beta f(q)^{1+j}\beta f(q-p)^{1+k}\big]=\frac{\theta_{j,k}(\omega, q,p)}{(q^2+1+\omega^2)^{1+j}((q-p)^2+1+\omega^2)^{1+k}}\,,
\end{equation*}
with the polynomials
\begin{align*}
 \theta_{1,0}=&4+8pq-12q^2-12\omega^2\\
 \theta_{2,0}=&4+12pq-24q^2-4(pq)q^2+4q^4-24\omega^2+4pq\omega^2+8q^2\omega^2 +4\omega^4\\
 \theta_{1,1}=& 4-4p^2+24pq-24q^2+4p^2q^2-8(pq)q^2+4q^4\\
 &-24\omega^2+4p^2\omega^2-8pq \omega^2 + 8q^2\omega^2 + 4\omega^4\,.
\end{align*}
Using the substitution
\begin{equation*}
 \frac{j!k!}{a^{1+j}b^{1+k}}=\int\limits_0^\infty \int\limits_0^\infty \xi^j \zeta^k \ue^{-a\xi-b\zeta}\,\ud\xi \ud\zeta
 =\int\limits_0^1 \int\limits_0^\infty s^{1+j+k} u^j(1-u)^k \ue^{-s(ua+(1-u)b)}\,\ud s\ud u
\end{equation*}
leads us to the Gaussian integrals
\begin{equation*}
 \int_\R \omega^{2l}\ue^{-s\omega^2}\,\ud \omega\,, \qquad \int_\R q^l_i \ue^{-s(q^2_i-2q_i(1-u)p_i)}\,\ud q_i\,,
\end{equation*}
for $\omega$ and the three components of $q$. Evaluating these gives the terms
\begin{equation*}
 j!k!\pi^2 s^{j+k-1}u^j(1-u)^k \Theta_{j,k}(s,p,u)\ue^{-s(m+u(1-u)p^2)}\,,
\end{equation*}
with $m=1$, or $m=1+\phi$ for the evaluation of~\eqref{eq:tr 1+phi}, and
\begin{align*}
 \Theta_{0,0}=&-\tfrac8s+ 4m^2+4u(1-u)p^2\\
 \Theta_{1.0}=&-\tfrac{24}{s}+4-4p^2+16p^2u-12p^2u^2\\
\Theta_{2,0}=&\tfrac{1}{s}\left(\tfrac{24}{s}-48+8 p^2-24 p^2 u+24 p^2 u^2\right)\\
&+4-4 p^2+24 p^2 u-24 p^2 u^2+4 p^4 u^2-8 p^4 u^3+4 p^4 u^4\\
\Theta_{1,1}=&\tfrac{1}{s}\left(\tfrac{24}{s}+12 p^2-36 p^2 u+24 p^2 u^2-48\right)\\
&-12 p^2+36 p^2 u-4 p^4 u-24 p^2 u^2+12 p^4 u^2-12 p^4 u^3+4 p^4 u^4\,.
\end{align*}
After gathering all of these terms (with the correct pre-factors) we split $T_{\mathrm{vac}}(p)=T_{\mathrm{vac}}(0) + T_{\mathrm{sing}}(p) + T_{\mathrm{reg}}(p)$, where $T_{\mathrm{sing}}$ contains all those summands whose integrand vanishes at $p=0$ and diverges at $s=0$. We then have
\begin{equation*}
 T_{\mathrm{vac}}(0)=(2\pi)^3\mathcal{V}^{''}(\phi)\,,
\end{equation*}
since $p=0$ corresponds to a perturbation by a constant field. Calculating the $\ud s$ integrals for the other terms yields
\begin{align*}
T_{\mathrm{sing}}(p)&=2\pi\int_0^1 \int_0^\infty  u(1-u)p^2 \ue^{-su(1-u)p^2}\frac{\ue^{-s(1+\phi)^2}-\ue^{-s}}{s} \,\ud s\ud u\\
&=2\pi p^2 \int_0^1 u(1-u)\log\left(\frac{1+u(1-u)p^2}{(1+\phi)^2 + u(1-u)p^2}\right) \,\ud u\,,
\end{align*}
and
\begin{align*}
 T_{\mathrm{reg}}(p)
 =2\pi p^2 \int_0^1\begin{aligned}[t]&\frac{\phi u(1-u)}{(1+u(1-u)p^2)^2}
   \Big(2-6u+2up^2-8u^2p^2+6u^3p^2\\
  &+\phi(1+2up^2-7u^2p^2+10u^3p^2-5u^4p^2)\Big)\,\ud u\,.
  \end{aligned}
\end{align*}
%
%
%
%

\end{document}